\theoremstyle{plain}
\newtheorem{theorem}{Theorem}[section]
\newtheorem{corollary}[theorem]{Corollary}
\theoremstyle{definition}
\theoremstyle{remark}
\newtheorem{remark}{Remark}
\title{ Tracking control of latent dynamic systems with application to spacecraft attitude control}
\author[1]{Congxi Zhang}
\author[1*]{Yongchun Xie} 
\affil[1]{Beijing Institute of Control Engineering, Beijing, P.~R.~China.} 
\affil[*]{Address correspondence to: xieyongchun@vip.sina.com} 
\date{}
\begin{document}

\maketitle

\begin{abstract}
 When intelligent spacecraft or space robots perform tasks in a complex environment, the controllable variables are usually not directly available and have to be inferred from high-dimensional observable variables, such as outputs of neural networks or images. While the dynamics of these observations are highly complex, the mechanisms behind them may be simple, which makes it possible to regard them as latent dynamic systems. For control of latent dynamic systems, methods based on reinforcement learning suffer from sample inefficiency and generalization problems. In this work, we propose an asymptotic tracking controller for latent dynamic systems. The latent variables are related to the high-dimensional observations through an unknown nonlinear function. The dynamics are unknown but assumed to be affine nonlinear. To realize asymptotic tracking, an identifiable latent dynamic model is learned to recover the latents and estimate the dynamics. This training process does not depend on the goals or reference trajectories. Based on the learned model, we use a manually designed feedback linearization controller to ensure the asymptotic tracking property of the closed-loop system. After considering fully controllable systems, the results are extended to the case that uncontrollable environmental latents exist. {As an application, simulation experiments on a latent spacecraft attitude dynamic model are conducted to verify the proposed methods, and the observation noise and control deviation are taken into consideration.}
\end{abstract}


\section{Introduction}
 Designing controllers to ensure the stability or tracking property of dynamic systems is a long-standing problem. { Take spacecraft attitude control as an example. Classic control methods include proportional integral derivative (PID) control, sliding mode control \cite{tiwari2015rigid,miao2019adaptive}, model predictive control \cite{hegrenaes2005spacecraft, dong2020tube}, characteristic model-based control \cite{wu2007characteristic}, and so on. Recently, intelligent control methods, such as fuzzy control \cite{chen2022intelligent}, neural network compensation \cite{sun2023satellite}, and reinforcement learning based gain tuning\cite{wei2023adaptive}, have been widely used in the identification and control problems of spacecraft attitude control systems.} Most of these methods, either manually designed controllers \cite{tiwari2015rigid,miao2019adaptive,hegrenaes2005spacecraft, dong2020tube,wu2007characteristic,khalil2002control,slotine1991applied} or data-driven approaches \cite{chen2022intelligent,sun2023satellite,wei2023adaptive,hunt1992neural,brunke2022safe,lederer2019uniform,chang2019neural,sun2021learning,richards2023learning}, require the state variables are low-dimensional and the dynamic models are reasonable, i.e., the model is a representation of some invariant physical mechanisms. However, in complex environments, an intelligent spacecraft may encounter high-dimensional observations such as outputs of neural networks or images \cite{scholkopf2021toward,tomar2023learning,yuanli,lilinfeng}, and the dynamics of these observations are highly nonlinear and unknown \cite{bengio2013representation,scholkopf2021toward}. These kinds of systems are generally regarded as latent dynamic systems.

A standard method to realize control of latent dynamic systems is end-to-end reinforcement learning (RL) \cite{singh2019end,andrychowicz2020learning}. While RL has shown great success in recent years, they suffer from sample inefficiency \cite{botvinick2019reinforcement,wen2023large} and  generalization issues \cite{kirk2023survey}. {Another line of works first learns a latent representation or a latent dynamic model. Then they use RL to train the control policy based on the learned representation \cite{tomar2023learning,gelada2019deepmdp,wu2023daydreamer,laskin2020curl,schwarzer2020data,zhang2020learning}, or use model predictive control \cite{watter2015embed,lenz2015deepmpc,hafner2019dream}. \cite{hafner2019learning} uses the reconstruction loss to learn a recurrent state-space model for planning. \cite{zhang2020learning} claims that reconstruction-based representations involve task-irrelevant details and use bisimulation loss to abstract only the task-relevant information. \cite{fu2021learning} tries to disentangle the task-relevant and task-irrelevant features. \cite{brandfonbrener2024inverse} pretrains an inverse dynamics model to recover the states.} These methods are based on the assumption that given the environments are complex, the latent variables, which can represent the mechanisms behind these observations, are generally low-dimensional, and corresponding dynamic models are simple \cite{scholkopf2021toward,mohammadi2023neural,jaffe2024learning}. Leveraging this assumption improves the sample efficiency and interpretability. However, most works still lack stability guarantees and the ability to generalize to new goals or reference trajectories without online planning. Moreover, the learned representations may be entangled with environmental latents  that are irrelevant to control or decision-making. This makes it hard to generalize to unseen environments \cite{wang2022denoised,huang2022action,liu2024learning}.

When designing a provably stable and generalizable controller for latent dynamic systems, one major challenge is how to ensure that the learned latent representation is identifiable and that the learned model is a suitable estimate. This problem is possible to be solved by using identifiable representation learning approaches \cite{locatello2019challenging,khemakhem2020variational,lachapelle2022disentanglement,ahuja2021properties,ahuja2022weakly,yao2022temporally,lippe2022citris}, which try to determine the relationship between the real latent variables and the learned ones. {\cite{yao2022temporally,lippe2022citris} use independent component analysis to identify the latent temporal processes. \cite{liu2023learning} learns an identifiable world model to disentangle the latents relevant/irrelevant to the action or the reward. \cite{zhang2024} recovers the identifiable representation and model for a class of affine nonlinear discrete-time latent dynamic systems.} However, to our knowledge, there is no identifiable representation learning result for continuous-time latent dynamic systems which provably infer the controllable latents from high-dimensional observations, { and none of these works consider the control problem.}

In this work, we show that it is possible to realize asymptotic tracking control for latent dynamic systems using identifiable representation and model learning methods and manually designed controllers. To be specific, we propose an asymptotic tracking controller for a class of latent dynamic systems, where the latent variables are related to the high-dimensional observations through an unknown nonlinear injective function, and the dynamics are affine nonlinear with some additional assumptions. 

We first train an identifiable latent dynamic model to recover the latents and estimate the dynamics. This learning approach is inspired by the mechanism-based perspective \cite{ahuja2021properties,ahuja2022weakly} and the results in \cite{zhang2024}, while we study continuous-time dynamic systems in this work. Then, we design a feedback linearization controller based on the learned latent dynamic model to ensure the tracking property. We consider both fully controllable systems and the case that uncontrollable environmental latents exist and show the results hold for unseen environmental dynamics. The main contributions are summarized as follows.

\begin{itemize}
	\item We propose an identifiable representation and model learning approach for a class of continuous-time affine nonlinear latent dynamic systems. Based on the learned latent dynamic model, we design a feedback linearization controller to ensure the asymptotic tracking property of the closed-loop systems.  
	\item We extend our results to the case where uncontrollable environmental latent variables exist and show that if the latent dynamic model is trained in several different training environments, the identifiability results and asymptotic tracking property hold for unseen test environment dynamics.
	\item {As an application, we conduct simulation experiments on a latent spacecraft attitude dynamic model and analysis the effect of the observation noise and control deviation.}
\end{itemize}

\section{Preliminaries}

\subsection{Latent Dynamic Systems}
A standard latent dynamic system consists of a dynamic model and a mixing function. The dynamic model describes the mechanism behind the observations, and the mixing function determines the relationship between latent variables and observable variables.

In this work, we consider dynamic systems which can be modeled as 
\begin{equation}
	\boldsymbol{q}^{(k)}=F(\boldsymbol{q}, \boldsymbol{\dot q},\cdots,\boldsymbol{q}^{(k-1)}) + \boldsymbol{B}(\boldsymbol{q}, \boldsymbol{\dot q},\cdots,\boldsymbol{q}^{(k-1)}) \boldsymbol{u},
	\label{dynamic model}
\end{equation}
where $\boldsymbol{q} \in \mathbb{R}^n$ is a time-varying variable, and $\forall i=1,\cdots,k$, $\boldsymbol{q}^{(i)}\doteq \frac{\mathrm{d}^i \boldsymbol{q}}{\mathrm{d} t^i}$ denotes the $i$-th time derivative of $\boldsymbol{q}$. $\boldsymbol{u} \in \mathbb{R}^n$ is the control input, which is also a time-varying variable. For simplicity of notation, we drop the dependency on time $t$ from all the variables in this paper when there is no confusion. Function $F: \mathbb{R}^{n k} \to \mathbb{R}^{n}$ is bounded on bounded sets. Functional matrix $\boldsymbol{B}: \mathbb{R}^{n k} \to \mathbb{R}^{n\times n}$ is diagonal and bounded on bounded sets. We let $\boldsymbol{z} \doteq [\boldsymbol{q}^T, \boldsymbol{\dot q}^T, \cdots, (\boldsymbol{q}^{(k-1)})^T]^T$ to denote the state variable in system (\ref{dynamic model}), and let $b_i(\boldsymbol{z})$ to denote the $i$-th component in the diagonal of $\boldsymbol{B}(\boldsymbol{z})$. Throughout this work, we assume $\forall b_i(z)\geq b_h>0$, where $b_h$ is an unknown positive constant. Both $F(\cdot)$ and $\boldsymbol{B}(\cdot)$ are unknown, but the order of (\ref{dynamic model}), i.e., $k$, is assumed to be known for simplicity. Otherwise,  $k$ should be determined during training.

 The state variable $\boldsymbol{z}$ is latent, i.e., it is not directly measured, both in the training and control processes,  but is related to an observation data $\boldsymbol{x} \in \mathcal{X} \subset \mathbb{R}^m$ as
\begin{equation}
	\boldsymbol{x} = g(\boldsymbol{z}),
	\label{g}
\end{equation}
where $g: \mathbb{R}^{nk} \to \mathcal{X}$ is an unknown mixing function, e.g., a neural network. $g$ is assumed to be injective (An injective function maps distinct elements of its domain to distinct elements) and $k$ times differentiable. The observation $\boldsymbol{x}$ and the input $\boldsymbol{u}$ are accessible for training and control.

The dynamic model (\ref{dynamic model}) and the mixing function (\ref{g}) together constitute the latent dynamic system we want to identify and control in this work.

\subsection{Asymptotic Tracking of Latent Dynamic Systems}

Our objective is to propose a controller to let the state variable $\boldsymbol{z}$ asymptotically track a bounded desired latent trajectory $\boldsymbol{z}_d\doteq [\boldsymbol{q}_d^T, \boldsymbol{\dot q}_d^T, \cdots, (\boldsymbol{q}_d^{(k-1)})^T]^T$, i.e., if we define the tracking error $\boldsymbol{e}\doteq \boldsymbol{z}-\boldsymbol{z}_d$, then as $t \to \infty$, $\boldsymbol{e} \to 0$.

The latent trajectory $\boldsymbol{z}_d$ is not directly given. Instead, we have the corresponding observation space trajectory $\boldsymbol{x}_d=g(\boldsymbol{z}_d)$.

\begin{remark}
When referring to stability in data-driven control methods, there are mainly two kinds of works. The first kind studies the stability during the training process and ensures the convergence of the learnable modules, e.g., neural networks \cite{nguyen2022analytic,nguyen2022layer}. Another line of work does not care about the training process. Instead, they directly use the approximation ability of multilayer neural networks \cite{hornik1989multilayer} to represent the dynamic models, controllers, or Lyapunov functions and ensure the stability of closed-loop systems \cite{chang2019neural,lederer2019uniform,sun2021learning}. Our work belongs to the second type. What makes this work different is that we study the control problem for latent dynamic systems, where the state is not accessible, and both the (highly nonlinear) representation function and the corresponding dynamic model need to be learned.
\end{remark}

 \section{Materials and Methods}
 \subsection{Learning and Control for Fully Controllable Systems}
 In this section, we show how to identify and control a fully controllable latent dynamic system. The control problem is solved by first learning an identifiable latent dynamic model (Section \ref{Sec Representation and Model Learning}) and then designing a feedback linearization controller based on it (Section \ref{Sec Controller Design}). The extensions to systems with uncontrollable environmental latents will be discussed in Section \ref{Sec Extensions to Systems with Uncontrollable Latents}. 
 \subsubsection{Representation and Model Learning}
 \label{Sec Representation and Model Learning}
 
 In this section, we develop the learning method to recover the latents and estimate the model, and provide the identifiability result, which ensures the relationship between the learned representations and the actual latent variables. 

Notice that the observation $\boldsymbol{x}$ is attainable but high-dimensional, and the dynamics of $\boldsymbol{x}$ are usually highly nonlinear. Direct learning the dynamics of $\boldsymbol{x}$ is laborious or even impossible. Hence, the controllable latents, which are sufficient and necessary for designing a controller in this case, have to be inferred from $\boldsymbol{x}$. We use a representation function $h:\mathcal{X} \to \mathbb{R}^{nk}$ to infer the latents, which is defined as follows.
 \begin{equation}
 	\boldsymbol{\hat z} = h(\boldsymbol{x}),
 	\label{h}
 \end{equation}
 where $\boldsymbol {\hat z} = [\boldsymbol {\hat q}_0^T, \boldsymbol {\hat q}_1^T, \cdots, \boldsymbol {\hat q}_{k-1}^T]^T$ is the estimate of $\boldsymbol {z}$, and $\forall i=1,\cdots,k-1$, $\boldsymbol {\hat q}_i$ is the estimate of $\boldsymbol{q}^{(i)}$ (Recall that $\boldsymbol{q}^{(i)}$ is the $i$-th time derivative of $\boldsymbol{q}$). $h(\cdot)$ is assumed to be $k$ times differentiable, and in practice, it is usually approximated by a learnable neural network.
 
 The estimated dynamic model is governed by the estimated vector field 
 \begin{equation}
 	 \hat {\mathcal{F}}(\boldsymbol{\hat z}, \boldsymbol{u})\doteq\begin{bmatrix}\boldsymbol{\hat q}_1\\ \vdots \\ \boldsymbol{\hat q}_{k-1} \\ \hat F(\boldsymbol{\hat z})+  \boldsymbol{\hat B}(\boldsymbol{\boldsymbol{\hat z}})\boldsymbol{u} \end{bmatrix},
 	\label{estimated dynamic model}
 \end{equation}
 where $\hat F: \mathbb{R}^{n k} \to \mathbb{R}^{n}$ is the estimate of $F(\cdot)$. $\boldsymbol{\hat B}: \mathbb{R}^{n k} \to \mathbb{R}^{n\times n}$ is the estimate of $\boldsymbol{B}(\cdot)$, which is a diagonal function matrix and all the components in the diagonal $\hat b(\cdot) \geq \hat b_h>0$, where $\hat b_h$ is a positive constant. In practice, $\hat F(\cdot)$ and $\boldsymbol{\hat B}(\cdot)$ are also modeled by learnable neural networks to make sure they can approximate any function.

 The representation function (\ref{h}) and the estimated dynamic model (\ref{estimated dynamic model}) are learned by solving 
 \begin{align}
 	\mathop{\min}_{\substack{\hat F(\cdot), \boldsymbol{\hat B}(\cdot), h(\cdot)}}&  \mathop{E}   [|| \boldsymbol{\dot {\hat z}}- \hat {\mathcal{F}}(\boldsymbol{\hat z}, \boldsymbol{u}) ||^2].\notag \\
 	\text{s.t.}\quad \ \ \ &  |\hat b(\cdot)|\geq \hat h_b>0.
 	\label{minimizing}
 \end{align}

 The expectation is taken over all the states $\boldsymbol{x}\in \mathcal{X}$, generated by all $\boldsymbol{z} \in  \mathbb{R}^{nk}$ through $g(\cdot)$, and all $\boldsymbol{u} \in \mathbb{R}^{n}$. The distributions are not necessarily known. The constraint $|\hat b(\cdot)|\geq \hat h_b>0$ can be satisfied simply by using a positive activation function
 in the output layer of $\boldsymbol{\hat B}(\cdot)$. Since in practice, $\boldsymbol{\dot {\hat z}}$ can not be directly acquired, we use the forward difference method to approximate it by following \cite{brunton2016discovering,fries2022lasdi,bonneville2024gplasdi,park2024tlasdi}, i.e.,
 \begin{equation}
 \boldsymbol{\dot {\hat z}} \approx \frac{\boldsymbol{\hat z}(t+\Delta t)-\boldsymbol{\hat z}(t)}{\Delta t}.
 	\label{forward difference}
 \end{equation}
 
 The following theorem guarantees the property of the learned latent dynamic models.
 \begin{theorem}
 	Suppose the latent dynamic system is given by (\ref{dynamic model}) and (\ref{g}). If we learn a model (\ref{h}) and (\ref{estimated dynamic model}) by solving (\ref{minimizing}), then $\boldsymbol{\dot {\hat z}}= \hat {\mathcal{F}}(\boldsymbol{\hat z}, \boldsymbol{u})$, $\boldsymbol{q}$ is identified up to an invertible componentwise transformation, and $\tau\doteq h\circ g$ is a diffeomorphism.
 	\label{theorem fully controlled}
 \end{theorem}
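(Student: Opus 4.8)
The plan is to prove the three assertions in sequence, reducing everything to a \emph{pointwise} functional identity and then exploiting the affine-in-$\boldsymbol u$ structure together with the diagonality of $\boldsymbol B$ and $\boldsymbol{\hat B}$.

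\emph{Zero training loss.} First I would argue that the optimal value of (\ref{minimizing}) is exactly zero and is attained, so that $\boldsymbol{\dot{\hat z}}=\hat{\mathcal F}(\boldsymbol{\hat z},\boldsymbol u)$ holds pointwise. The key observation is that the ground truth already realizes zero loss: taking $h$ to be the inverse of $g$ on its image (well defined and $k$ times differentiable because $g$ is injective and $k$ times differentiable), together with $\hat F=F$ and $\boldsymbol{\hat B}=\boldsymbol B$, gives $\boldsymbol{\hat z}=\boldsymbol z$ and hence $\boldsymbol{\dot{\hat z}}=\boldsymbol{\dot z}=\hat{\mathcal F}(\boldsymbol{\hat z},\boldsymbol u)$, so the objective vanishes and the constraint holds with $\hat h_b=b_h$. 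Since the function classes are expressive enough to represent this solution (universal approximation) and the integrand is nonnegative, any minimizer drives the expectation to zero; because the expectation ranges over all $\boldsymbol z\in\mathbb R^{nk}$ and $\boldsymbol u\in\mathbb R^n$ and the residual is continuous, it must vanish identically. This is the first claim.

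\emph{Reduction to a functional identity.} Writing $\boldsymbol{\hat z}=h(g(\boldsymbol z))=\tau(\boldsymbol z)$ and differentiating along trajectories, $\boldsymbol{\dot{\hat z}}=J_\tau(\boldsymbol z)\,\boldsymbol{\dot z}=J_\tau(\boldsymbol z)\,\mathcal F(\boldsymbol z,\boldsymbol u)$, where $\mathcal F$ is the true vector field with drift $[(\boldsymbol q^{(1)})^T,\dots,(\boldsymbol q^{(k-1)})^T,F^T]^T$ and a control matrix whose only nonzero block is its last one, $\boldsymbol B(\boldsymbol z)$. Equating this with $\hat{\mathcal F}(\tau(\boldsymbol z),\boldsymbol u)$ from (\ref{estimated dynamic model}) gives an identity that holds for all $\boldsymbol u$, so I would match the coefficient matrix of $\boldsymbol u$ and the constant part separately. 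Partitioning $J_\tau$ into $n\times n$ blocks $J_\tau^{(i,j)}\doteq\partial\tau_i/\partial\boldsymbol q^{(j)}$, the $\boldsymbol u$-coefficient yields $J_\tau^{(i,k-1)}\boldsymbol B=\boldsymbol 0$ for $i<k-1$ and $J_\tau^{(k-1,k-1)}\boldsymbol B=\boldsymbol{\hat B}$. Since $\boldsymbol B$ is diagonal with entries bounded below by $b_h>0$ it is invertible, so $\tau_i$ is independent of $\boldsymbol q^{(k-1)}$ for every $i\le k-2$, and $J_\tau^{(k-1,k-1)}=\boldsymbol{\hat B}\boldsymbol B^{-1}$ is diagonal and invertible.

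\emph{Componentwise identifiability.} The constant ($\boldsymbol u=0$) part, together with the integrator-chain form of (\ref{estimated dynamic model}), gives $\tau_{i+1}=\tfrac{d}{dt}\tau_i$ for $i=0,\dots,k-2$. The core step is a downward induction on derivative order: if $\tau_0$ genuinely depended on some $\boldsymbol q^{(r)}$ with $r\ge 1$ (note $r\le k-2$), then each application of $\tfrac{d}{dt}$ raises the top derivative by exactly one while preserving the leading coefficient $\partial\tau_0/\partial\boldsymbol q^{(r)}\neq\boldsymbol 0$, so $\tau_{k-1-r}$ would depend on $\boldsymbol q^{(k-1)}$ — contradicting the independence just derived, since $k-1-r\le k-2$. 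Hence $\tau_0=\tau_0(\boldsymbol q)$. Iterating the chain relation then gives $\tau_i=J_0(\boldsymbol q)\boldsymbol q^{(i)}+(\text{lower-order terms})$ with $J_0\doteq\partial\tau_0/\partial\boldsymbol q$, whence $J_\tau^{(k-1,k-1)}=\partial\tau_{k-1}/\partial\boldsymbol q^{(k-1)}=J_0(\boldsymbol q)$. Combined with the previous paragraph, $J_0$ is diagonal and invertible, which forces each component $[\tau_0]_l$ to depend only on $q_l$ with nonvanishing derivative; that is, $\boldsymbol q$ is recovered up to an invertible componentwise transformation, the second claim.

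\emph{Diffeomorphism.} Finally, since $\tau_i$ depends only on $\boldsymbol q^{(0)},\dots,\boldsymbol q^{(i)}$ with $\partial\tau_i/\partial\boldsymbol q^{(i)}=J_0(\boldsymbol q)$, the Jacobian $J_\tau$ is block lower-triangular with identical invertible diagonal blocks $J_0$, so $\det J_\tau=(\det J_0)^{k}\neq 0$ everywhere. I would then exhibit the inverse recursively: invert the componentwise $\tau_0$ to recover $\boldsymbol q$, and solve $\boldsymbol q^{(i)}=J_0(\boldsymbol q)^{-1}\bigl(\boldsymbol{\hat q}_i-(\text{terms already recovered})\bigr)$ for $i=1,\dots,k-1$. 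This gives a smooth global inverse, so $\tau=h\circ g$ is a diffeomorphism. I expect the main obstacle to be making the induction in the third paragraph fully rigorous — in particular, the bookkeeping showing that the top-order derivative term never cancels, so that the leading coefficient indeed survives each differentiation — and, secondarily, justifying that the loss in (\ref{minimizing}) is attained at exactly zero rather than merely approached.
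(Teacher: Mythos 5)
Your proposal is correct and follows essentially the same route as the paper's proof: zero attainable loss gives the pointwise identity $\boldsymbol{\dot{\hat z}}=\hat{\mathcal F}(\boldsymbol{\hat z},\boldsymbol u)$, the chain rule gives $\boldsymbol J(\boldsymbol z)\boldsymbol{\dot z}=\hat{\mathcal F}(\tau(\boldsymbol z),\boldsymbol u)$, differentiating in $\boldsymbol u$ pins down the last block column of $\boldsymbol J$, and one then shows $\boldsymbol J$ is block lower triangular with diagonal blocks $\boldsymbol{\hat B}\boldsymbol B^{-1}$. The only organizational difference is in the induction: the paper repeatedly alternates time differentiation with differentiation in $\boldsymbol u$ to peel off the block columns of $\boldsymbol J$ one at a time, whereas you differentiate in $\boldsymbol u$ once and then run a downward induction on the highest derivative order on which $\tau_0$ depends, using the fact that the integrator chain $\tau_{i+1}=\tfrac{d}{dt}\tau_i$ propagates the top-order dependence upward without cancellation; both arguments are valid and yield the same conclusion, and your explicit recursive construction of the inverse is in fact slightly more careful than the paper's appeal to a nonsingular Jacobian when concluding that $\tau$ is a diffeomorphism (onto its image).
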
	
 
 When saying that  $\boldsymbol{q}$ is identified up to an invertible componentwise transformation, it means there exists an invertible componentwise transformation, denoted as $\tau_q(\cdot)$, such that $\boldsymbol{\hat q}_0=\tau_q(\boldsymbol{q})$ \cite{ahuja2022weakly,yao2022temporally,lippe2022citris}. The prove of Theorem \ref{theorem fully controlled} is given as follows.
 
 \begin{proof}
 	By following  \cite{ahuja2022weakly}, solving (\ref{minimizing}) immediately  gives $\boldsymbol{\dot {\hat z}}= \hat {\mathcal{F}}(\boldsymbol{\hat z}, \boldsymbol{u})$, i.e.,
 	\begin{equation}
 		\boldsymbol{\dot {\hat z}}=\begin{bmatrix} \boldsymbol{\dot {\hat q}}_0\\ \vdots \\ \boldsymbol{\dot {\hat q}}_{k-2} \\ \boldsymbol{\dot {\hat q}}_{k-1} \end{bmatrix}=\begin{bmatrix} \boldsymbol{\hat q_1}\\ \vdots \\ \boldsymbol{\hat q}_{k-1} \\ \hat F(\boldsymbol{\hat z})+ \boldsymbol{\hat B}(\boldsymbol{\hat z})\boldsymbol{u} \end{bmatrix}.
 		\label{dot hat z}
 	\end{equation} 
 	This is because the expectation in (\ref{minimizing}) is lower bounded by 0, and there exist models (e.g., the actual model) that obtain this global minimum.  
 	
 	We then rewrite the dynamic system (\ref{dynamic model}) in its state space form as 
 	\begin{equation}
 		\boldsymbol{\dot z} = \begin{bmatrix} \boldsymbol{\dot q}\\ \vdots \\  \boldsymbol{q}^{(k-1)} \\ F(\boldsymbol{z})+ \boldsymbol{B}(\boldsymbol{z})\boldsymbol{u} \end{bmatrix}.
 		\label{dot z}
 	\end{equation}

 	Since both $h$ and $g$ are $k$ times differentiable functions, $\tau= h\circ g$ is also $k$ times differentiable. Taking the time derivative of both sides in $\boldsymbol{\hat z} = \tau(\boldsymbol{z})$ gives
 	\begin{equation}
 		\boldsymbol{\dot {\hat z}} = \boldsymbol{J}(\boldsymbol{z}) \boldsymbol{\dot z},
 		\label{J_z}
 	\end{equation}
 	where $\boldsymbol{J}(\boldsymbol{z})\in \mathbb{R}^{nk \times nk}$ is the the Jacobian of $\tau(\boldsymbol{z})$. Then from (\ref{dot hat z}), (\ref{dot z}), and (\ref{J_z}), we obtain
 	\begin{equation}
 		\begin{bmatrix} \boldsymbol{\hat q}_1\\ \vdots \\ \boldsymbol{\hat q}_{k-1} \\ \hat F(\boldsymbol{\hat z})+ \boldsymbol{\hat B}(\boldsymbol{\hat z})\boldsymbol{u} \end{bmatrix} = \boldsymbol{J}(\boldsymbol{z}) \begin{bmatrix} \boldsymbol{\dot q}\\ \vdots \\  \boldsymbol{q}^{(k-1)} \\   F(\boldsymbol{z})+   \boldsymbol{B}(\boldsymbol{z})\boldsymbol{u} \end{bmatrix}.
 		\label{J_z ex}
 	\end{equation}
 	Taking the derivative of above equation w.r.t. $\boldsymbol{u}$ gives
 	\begin{equation}
 		\begin{bmatrix} \boldsymbol{0}\\ \vdots \\ \boldsymbol{0} \\ \boldsymbol{\hat B}(\boldsymbol{\hat z}) \end{bmatrix} = \boldsymbol{J}(\boldsymbol{z}) \begin{bmatrix} \boldsymbol{0}\\ \vdots \\ \boldsymbol{0} \\ \boldsymbol{B}(\boldsymbol{z}) \end{bmatrix}.
 	\end{equation}
 	
 	Let $\boldsymbol{J}_{i,j}(\boldsymbol{z}) \in \mathbb{R}^{n \times n}$ denote the $(i,j)$-th $n \times n$ block of $\boldsymbol{J}(\boldsymbol{z})$, then above equation gives that $\forall i=1,\cdots,k-1$, $\boldsymbol{J}_{i,k}(\boldsymbol{z})=\boldsymbol{0}$ and $\boldsymbol{J}_{k,k}(\boldsymbol{z})=\boldsymbol{\hat B}(\boldsymbol{\hat z})\boldsymbol{B}^{-1}(\boldsymbol{z})$. This means that $\forall i=0,\cdots,k-2$, $\boldsymbol{\hat q}_i$ is not a function of $\boldsymbol{q}^{k-1}$, and hence $\forall i,j=1,\cdots,k-1$, $\boldsymbol{J}_{i,j}(\boldsymbol{z})$ is  not a function of $\boldsymbol{q}^{k-1}$, which implies $\forall i,j=1,\cdots,k-1$, $\boldsymbol{J}_{i,j}(\boldsymbol{z})=\boldsymbol{J}_{i,j}(\boldsymbol{q},\boldsymbol{\dot q},\dots,\boldsymbol{q}^{(k-2)})$. Hence we can rewrite the first $n(k-1)$ row of equation (\ref{J_z ex}) as 
 	\begin{equation}
 		\begin{bmatrix} \boldsymbol{\hat q}_1\\ \vdots \\ \boldsymbol{\hat q}_{k-1}\end{bmatrix} = \boldsymbol{J}_{1:k-1}(\boldsymbol{q},\boldsymbol{\dot q},\dots,\boldsymbol{q}^{(k-2)}) \begin{bmatrix} \boldsymbol{\dot q}\\ \vdots \\  \boldsymbol{q}^{(k-1)} \end{bmatrix}, 
 		\label{J_1:k-1}
 	\end{equation}
 	where
 	\begin{equation}
 		  \boldsymbol{J}_{1:k-1}(\boldsymbol{q},\boldsymbol{\dot q},\dots,\boldsymbol{q}^{(k-2)}) \doteq \begin{bmatrix} \boldsymbol{J}_{11}(\boldsymbol{z})&\cdots& \boldsymbol{J}_{1,k-1}(\boldsymbol{z}) \\ \vdots & \ddots & \vdots\\  \boldsymbol{J}_{k-1,1}(\boldsymbol{z})&\cdots& \boldsymbol{J}_{k-1,k-1}(\boldsymbol{z}) \end{bmatrix}. 
 	\end{equation}

 	Taking the time derivative of equation (\ref{J_1:k-1}) gives
 	\begin{equation}
 		\begin{bmatrix} \boldsymbol{\hat q}_2\\ \vdots \\ \boldsymbol{\hat q}_{k-1}\\ \hat F(\boldsymbol{\hat z})+ \boldsymbol{\hat B}(\boldsymbol{\hat z})\boldsymbol{u} \end{bmatrix} =  \boldsymbol{\dot J}_{1:k-1}(\boldsymbol{q},\boldsymbol{\dot q},\dots,\boldsymbol{q}^{(k-2)}) \begin{bmatrix} \boldsymbol{\dot q}\\ \vdots \\  \boldsymbol{q}^{(k-1)}\end{bmatrix}  +   \boldsymbol{J}_{1:k-1}(\boldsymbol{q},\boldsymbol{\dot q},\dots,\boldsymbol{q}^{(k-2)}) \begin{bmatrix} \boldsymbol{\ddot q}\\ \vdots \\ \boldsymbol{q}^{(k-1)}\\ F(\boldsymbol{z})+   \boldsymbol{B}(\boldsymbol{z})\boldsymbol{u}  \end{bmatrix}
 	\end{equation}
 	Taking the derivative of above equation w.r.t. $\boldsymbol{u}$ gives
 	\begin{equation}
 		\begin{bmatrix} \boldsymbol{0}\\ \vdots \\ \boldsymbol{0} \\ \boldsymbol{\hat B}(\boldsymbol{\hat z})  \end{bmatrix} = \boldsymbol{J}_{1:k-1}(\boldsymbol{q},\boldsymbol{\dot q},\dots,\boldsymbol{q}^{(k-2)}) \begin{bmatrix} \boldsymbol{0}\\ \vdots \\ \boldsymbol{0} \\ \boldsymbol{B}(\boldsymbol{z}) \end{bmatrix}.
 	\end{equation}
 	Above equation gives that  $\forall i=1,\cdots,k-2$, $\boldsymbol{J}_{i,k-1}(\boldsymbol{z})=\boldsymbol{J}_{i,k-1}(\boldsymbol{q},\boldsymbol{\dot q},\dots,\boldsymbol{q}^{(k-2)})=\boldsymbol{0}$ and $\boldsymbol{J}_{k-1,k-1}(\boldsymbol{z})=\boldsymbol{\hat B}(\boldsymbol{\hat z})\boldsymbol{B}^{-1}(\boldsymbol{z})$, which means $\forall i=0,\cdots,k-3$, $\boldsymbol{\hat q}_i$ is not a function of $\boldsymbol{q}^{k-2}$ and $\forall i,j=1,\cdots,k-2$, $\boldsymbol{J}_{i,j}(\boldsymbol{z})$ is  not a function of $\boldsymbol{q}^{k-2}$.
 	
 	By recursively using above procedures we can conclude that $\boldsymbol{\hat q}_0$ is only a function of $\boldsymbol{q}$, i.e., $\boldsymbol{\hat q}_0=\tau_q(\boldsymbol{q})$, and $\boldsymbol{J}_{1,1}(\boldsymbol{z})=\boldsymbol{J}_{1,1}(\boldsymbol{q})=\boldsymbol{\hat B}(\boldsymbol{\hat z})\boldsymbol{B}^{-1}(\boldsymbol{z})$, where $\boldsymbol{J}_{1,1}(\cdot)$ is also the Jacobian of $\tau_q(\cdot)$. 
 	
 	Since $\boldsymbol{\hat B}(\boldsymbol{\hat z})$ and $\boldsymbol{B}(\boldsymbol{z})$ are all positive definite diagonal matrices, $\boldsymbol{J}_{1,1}(\boldsymbol{q})$ is also a positive definite diagonal matrix, which gives that $\tau_q(\cdot)$ is a componentwise invertible function.
 	
 	We also obtain that $\forall i=1 \cdots, k$, $\boldsymbol{J}_{i,i}(\boldsymbol{z})=\boldsymbol{\hat B}(\boldsymbol{\hat z})\boldsymbol{B}^{-1}(\boldsymbol{z})$, and $\boldsymbol{J}(\boldsymbol{z})$ is a lower triangular matrix. Hence $\forall \boldsymbol{z} \in \mathbb{R}^{nk}$, $\boldsymbol{J}(\boldsymbol{z})$ is nonsingular, which gives that the  differentiable function $\tau= h\circ g$ is a diffeomorphism.
 \end{proof}
 
 For Theorem \ref{theorem fully controlled}, we make several remarks here:
 \begin{itemize}
 	\item Assuming $\boldsymbol{x} = g(\boldsymbol{z})$ means that all the information determining the time evolution of system (\ref{dynamic model}) is contained in the observation $\boldsymbol{x}$, which makes it possible to infer all the latents from current observations. If the observation only contains the information of $\boldsymbol{q}$, i.e., $\boldsymbol{x} = g(\boldsymbol{q})$, similar results can also be easily derived. However, in this case, we need to calculate the high-order time derivatives of $\boldsymbol{q}$ both for training and control.
 		
 	\item While we assume that $\boldsymbol{B}(\boldsymbol{z})$ is a diagonal function matrix, this can be easily extended to arbitrarily bounded and nonsingular matrix if we constrain $\boldsymbol{\hat B}(\boldsymbol{\hat z})$ to be bounded and nonsingular. However, in this case, using feedback linearization controllers need to calculate the inverse of $\boldsymbol{\hat B}(\boldsymbol{\hat z})$, which is time-consuming and may lead to instability in practice. Hence a decoupled representation and corresponding control are always preferred when it is possible.
 	
 	\item If $\boldsymbol{B}(\boldsymbol{z})$ is a (nondiagonal) constant matrix, i.e., $\boldsymbol{B}(\boldsymbol{z})=\boldsymbol{B}$, then we can always represent the dynamic system as (\ref{dynamic model}) using a linear coordinate transformation. In this case ($\boldsymbol{B}(\boldsymbol{z})$ is a diagonal constant matrix), we can 
 	 identify the latents and the model one step further, which we summarize in the following corollary. 
 \end{itemize} 
 
 \begin{corollary}
 	Suppose all the assumptions in Theorem \ref{theorem fully controlled} hold. If  $\boldsymbol{B}(\cdot)=\boldsymbol{B}$, $\boldsymbol{\hat B}(\cdot)=\boldsymbol{\hat B}$,  then $\boldsymbol{z}$ is identified up to scaling and translations. If, in addition, both $F(\cdot)$ and $\hat F(\cdot)$ are linear maps, i.e., the real dynamic system and estimated dynamic model are linear, then the coefficients in the system matrix are identified to the truth-value.
 \end{corollary}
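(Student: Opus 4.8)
The plan is to build directly on the conclusions of Theorem~\ref{theorem fully controlled}, whose proof already isolates the relevant Jacobian. Recall from that proof that $\boldsymbol{\hat q}_0 = \tau_q(\boldsymbol{q})$, that $\tau_q$ is componentwise invertible, and that its Jacobian satisfies $\boldsymbol{J}_{1,1}(\boldsymbol{q}) = \boldsymbol{\hat B}(\boldsymbol{\hat z})\boldsymbol{B}^{-1}(\boldsymbol{z})$. For the first claim I would substitute the standing hypothesis $\boldsymbol{B}(\cdot) = \boldsymbol{B}$ and $\boldsymbol{\hat B}(\cdot) = \boldsymbol{\hat B}$. Then $\boldsymbol{J}_{1,1}(\boldsymbol{q}) = \boldsymbol{\hat B}\boldsymbol{B}^{-1} =: \boldsymbol{D}$ is a constant, positive-definite, diagonal matrix, independent of $\boldsymbol{q}$. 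A $C^1$ map on the connected domain $\mathbb{R}^n$ with constant Jacobian is affine, so $\tau_q(\boldsymbol{q}) = \boldsymbol{D}\boldsymbol{q} + \boldsymbol{c}$ for some constant $\boldsymbol{c}$; since $\boldsymbol{D}$ is diagonal this is exactly a componentwise scaling followed by a translation, which is the first assertion.

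To propagate this to the full state I would differentiate in time. Using the identified relations $\boldsymbol{\hat q}_i = \boldsymbol{\dot{\hat q}}_{i-1}$ from \eqref{dot hat z} together with $\boldsymbol{\hat q}_0 = \boldsymbol{D}\boldsymbol{q} + \boldsymbol{c}$, one obtains $\boldsymbol{\hat q}_i = \boldsymbol{D}\boldsymbol{q}^{(i)}$ for $i = 1,\dots,k-1$, hence $\boldsymbol{\hat z} = \boldsymbol{T}\boldsymbol{z} + \boldsymbol{b}$ with $\boldsymbol{T} = \mathrm{diag}(\boldsymbol{D},\dots,\boldsymbol{D})$ and $\boldsymbol{b} = [\boldsymbol{c}^T,0,\dots,0]^T$. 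This also re-derives, in the constant case, that all diagonal blocks of $\boldsymbol{J}$ equal $\boldsymbol{D}$, consistent with the triangular structure found in Theorem~\ref{theorem fully controlled}.

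For the second claim I would compare the last block rows of the two dynamics. Linearity gives $F(\boldsymbol{z}) = \boldsymbol{A}\boldsymbol{z}$ and $\hat F(\boldsymbol{\hat z}) = \boldsymbol{\hat A}\boldsymbol{\hat z}$ with block row decompositions $\boldsymbol{A} = [\boldsymbol{A}_0,\dots,\boldsymbol{A}_{k-1}]$ and $\boldsymbol{\hat A} = [\boldsymbol{\hat A}_0,\dots,\boldsymbol{\hat A}_{k-1}]$. Differentiating $\boldsymbol{\hat q}_{k-1} = \boldsymbol{D}\boldsymbol{q}^{(k-1)}$ and inserting the true dynamics yields $\boldsymbol{\dot{\hat q}}_{k-1} = \boldsymbol{D}(\boldsymbol{A}\boldsymbol{z} + \boldsymbol{B}\boldsymbol{u})$, which must coincide with the learned last row $\boldsymbol{\hat A}\boldsymbol{\hat z} + \boldsymbol{\hat B}\boldsymbol{u}$. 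Matching the $\boldsymbol{u}$-terms recovers the (already known) relation $\boldsymbol{\hat B} = \boldsymbol{D}\boldsymbol{B}$, and matching the state-dependent terms gives $\boldsymbol{\hat A}\boldsymbol{\hat z} = \boldsymbol{D}\boldsymbol{A}\boldsymbol{z}$ for all $\boldsymbol{z}$. Substituting $\boldsymbol{\hat z} = \boldsymbol{T}\boldsymbol{z} + \boldsymbol{b}$ and separating the $\boldsymbol{z}$-linear and constant parts produces $\boldsymbol{\hat A}\boldsymbol{b} = 0$ and $\boldsymbol{\hat A}\boldsymbol{T} = \boldsymbol{D}\boldsymbol{A}$, i.e. blockwise $\boldsymbol{\hat A}_i = \boldsymbol{D}\boldsymbol{A}_i\boldsymbol{D}^{-1}$.

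The main obstacle is this last step: making precise the sense in which the coefficients equal the truth value. The derivation pins the learned coefficients to the true ones up to conjugation by the diagonal scaling $\boldsymbol{D}$ already identified in the first part, with no further ambiguity — and this is the best one can hope for, since $\boldsymbol{\hat B}$, and hence $\boldsymbol{D}$, is genuinely free in \eqref{minimizing}. I would therefore phrase the conclusion modulo this scaling: normalizing the representation so that $\boldsymbol{D} = \boldsymbol{I}$ (equivalently, choosing $\boldsymbol{\hat B} = \boldsymbol{B}$) forces $\boldsymbol{\hat A}_i = \boldsymbol{A}_i$ exactly, and for decoupled dynamics in which each $\boldsymbol{A}_i$ is diagonal the conjugation is trivial, so $\boldsymbol{\hat A}_i = \boldsymbol{A}_i$ irrespective of $\boldsymbol{D}$. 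As routine checks I would verify the constant-Jacobian-implies-affine step uses connectedness of the domain, and confirm that $\boldsymbol{\hat A}\boldsymbol{b} = 0$ (which reads $\boldsymbol{\hat A}_0\boldsymbol{c} = 0$) is automatically consistent, as seen by evaluating the matched dynamics at $\boldsymbol{z} = 0$.
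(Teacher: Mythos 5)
Your argument tracks the paper's proof step for step: the constant Jacobian $\boldsymbol{J}_{1,1}=\boldsymbol{\hat B}\boldsymbol{B}^{-1}$ on the connected domain $\mathbb{R}^{n}$ forces $\tau_q$ to be affine, time differentiation propagates the same diagonal scaling $\Delta\boldsymbol{B}=\boldsymbol{\hat B}\boldsymbol{B}^{-1}$ to every block of $\boldsymbol{\hat z}$ (the paper's block-diagonal transformation plus the translation $[\boldsymbol{c}^T,\boldsymbol{0},\dots,\boldsymbol{0}]^T$), and the linear case is handled by matching the last block row of (\ref{J_z ex}) at $\boldsymbol{u}=\boldsymbol{0}$. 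Where you diverge is only the final step, and there your version is the more careful one: from $\boldsymbol{\hat F}\,\mathrm{diag}(\Delta\boldsymbol{B},\dots,\Delta\boldsymbol{B})=\Delta\boldsymbol{B}\,\boldsymbol{F}$ the paper concludes $\boldsymbol{\hat F}=\boldsymbol{F}$ outright, whereas this relation only yields the blockwise conjugation $\boldsymbol{\hat F}_i=\Delta\boldsymbol{B}\,\boldsymbol{F}_i\,\Delta\boldsymbol{B}^{-1}$, i.e.\ entry $(a,b)$ of each block is recovered up to the factor $(\Delta\boldsymbol{B})_{aa}/(\Delta\boldsymbol{B})_{bb}$, so only the diagonal entries are pinned to their truth values unless one additionally normalizes $\boldsymbol{\hat B}=\boldsymbol{B}$ or assumes the $\boldsymbol{F}_i$ are diagonal. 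Since (\ref{minimizing}) constrains $\boldsymbol{\hat B}$ only by a positive lower bound, $\Delta\boldsymbol{B}$ is genuinely free, so your qualified restatement is the correct reading of ``identified to the truth-value'' and the paper's unqualified $\boldsymbol{\hat F}=\boldsymbol{F}$ is an overstatement. Two minor points: your symbol $\boldsymbol{A}$ for the system matrix collides with the controller matrix $\boldsymbol{A}$ of (\ref{A}), so rename it ($\boldsymbol{F}$, as in the paper); and the condition $\boldsymbol{\hat F}_0\boldsymbol{c}=\boldsymbol{0}$ is not merely a consistency check but an informative constraint --- the paper uses it to conclude $\boldsymbol{c}=\boldsymbol{0}$ when the first $n\times n$ block of $\boldsymbol{F}$ is nonsingular.
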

 \begin{proof}
 	From the proof of Theorem \ref{theorem fully controlled}, we have $\boldsymbol{\hat q}_0=\tau_q(\boldsymbol{q})$ and $\boldsymbol{J}_{1,1}(\boldsymbol{q})=\boldsymbol{\hat B} \boldsymbol{B}^{-1}$ is a positive definite diagonal matrix. This immediately gives that $\tau_q(\cdot)$ is an affine function. Let $\Delta \boldsymbol{B} = \boldsymbol{\hat B} \boldsymbol{B}^{-1}$ and $\boldsymbol{\hat q}_0=\tau_q(\boldsymbol{q})=\Delta \boldsymbol{B} \boldsymbol{q}+\boldsymbol{c}$, then
 	 \begin{equation}
 		\left\{
 		\begin{array}{l}
 			 \boldsymbol{\hat q}_1=\boldsymbol{\dot {\hat q}}_0=\Delta \boldsymbol{B} \boldsymbol{\dot q}\\
 			 \qquad \vdots\\
 			 \boldsymbol{\hat q}_{k-1}=\boldsymbol{\dot {\hat q}}_{k-2}=\cdots= \boldsymbol{\hat q}^{k-1}_0=\Delta \boldsymbol{B} \boldsymbol{q}^{k-1}.
 		\end{array}
 		\right.  
 	\end{equation} 
 	And hence
 	\begin{equation}
 		 \boldsymbol{\hat z}=\begin{bmatrix} \Delta \boldsymbol{B} &\boldsymbol{0} &\cdots&\boldsymbol{0}\\\boldsymbol{0}&\Delta \boldsymbol{B}& \cdots&\boldsymbol{0}\\ \vdots&\vdots&\ddots&\vdots\\\boldsymbol{0}&\boldsymbol{0}&\cdots&\Delta \boldsymbol{B} \end{bmatrix} \boldsymbol{z} + \begin{bmatrix}  \boldsymbol{I}\\\boldsymbol{0}\\ \vdots\\\boldsymbol{0} \end{bmatrix} \boldsymbol{c},
 	\end{equation} 
 	which proves the first part of this corollary.
 	
 	If $F(\boldsymbol{z}) \doteq \boldsymbol{F}\boldsymbol{z}$ and $\hat F(\boldsymbol{\hat z}) \doteq \boldsymbol{\hat F} \boldsymbol{\hat z}$, then substituting above equation into the last $n$ rows of equation (\ref{J_z ex}) and letting $\boldsymbol{u}=\boldsymbol{0}$ give
 	\begin{equation}
 		 \boldsymbol{\hat F}\begin{bmatrix} \Delta \boldsymbol{B} &\boldsymbol{0} &\cdots&\boldsymbol{0}\\\boldsymbol{0}&\Delta \boldsymbol{B}& \cdots&\boldsymbol{0}\\ \vdots&\vdots&\ddots&\vdots\\\boldsymbol{0}&\boldsymbol{0}&\cdots&\Delta \boldsymbol{B} \end{bmatrix} \boldsymbol{z} + \boldsymbol{\hat F} \begin{bmatrix}  \boldsymbol{I}\\\boldsymbol{0}\\ \vdots\\\boldsymbol{0} \end{bmatrix} \boldsymbol{c}=   \Delta \boldsymbol{B} \boldsymbol{F}\boldsymbol{z}.
 	\end{equation} 
 	Since above equation holds for all $\boldsymbol{z} \in 
 	\mathbb{R}^{nk}$ and $\Delta \boldsymbol{B}$ is a positive definite diagonal matrix, we obtain $\boldsymbol{\hat F}=\boldsymbol{F}$, which complete the proof. And if, in addition, the first $n\times n$ matrix in $\boldsymbol{F}$ is nonsingular, we can obtain $\boldsymbol{c}=0$, which gives that $\boldsymbol{z}$ is identified up to scaling in this case.
 	
\end{proof}

 \subsubsection{Controller Design}
 \label{Sec Controller Design}
 
 In this section, we develop the controller to ensure the asymptotic tracking property based on the learned latent dynamic model. For simplicity, we use the feedback linearization method \cite{isidori1985nonlinear,hagan2002introduction,lederer2019uniform}. The controller is given as 
 
 \begin{equation}
 	\left\{
 	\begin{array}{l}
 		\boldsymbol{u}=\boldsymbol{\hat B}^{-1}( \boldsymbol{\hat z})[-\hat F(\boldsymbol{\hat z} ) + \boldsymbol{\dot {\hat q}}_{d,k-1} + \sum_{i=0}^{(k-1)}\boldsymbol{K}_i(\boldsymbol{\hat q}_{d,i} - \boldsymbol{\hat q}_{i})]  \\
 		\boldsymbol{\hat z}_d\doteq \begin{bmatrix}\boldsymbol{\hat q}_{d,0}\\ \vdots \\ \boldsymbol{\hat q}_{d,k-1}   \end{bmatrix}=h(\boldsymbol{x}_d),
 	\end{array}
 	\right. 
 	\label{controller 1}
 \end{equation} 
 where $\boldsymbol{\hat z}$ (and hence $\boldsymbol{\hat q}_i$) is inferred from $\boldsymbol{\hat x}$ by using learned representation function (\ref{h}). Observation space trajectory is given by $\boldsymbol{x}_d=g(\boldsymbol{z}_d)$, where $\boldsymbol{z}_d \doteq [\boldsymbol{q}_d^T, \boldsymbol{\dot q}_d^T, \cdots, (\boldsymbol{q}_d^{(k-1)})^T]^T$. In practice, $\boldsymbol{\dot {\hat q}}_{d,k-1}$ is approximated
  by $ [ \boldsymbol{\boldsymbol{\hat q}}_{d,k-1}(t+\Delta t)-\boldsymbol{\boldsymbol{\hat q}}_{d,k-1}(t) ]\big /\Delta t$. $\boldsymbol{K}_0,\cdots,\boldsymbol{K}_{k-1}$ are manually designed controller coefficient matrices.
  
 For simplicity we define 
  \begin{equation}  \boldsymbol{A}=\begin{bmatrix}0&1&0&\cdots&0\\ 0&0&1&\cdots&0 \\ \vdots&\vdots&\vdots&\ddots&\vdots\\0&0&0&\cdots&1\\ -\boldsymbol{K}_0&-\boldsymbol{K}_1&-\boldsymbol{K}_2&\cdots&-\boldsymbol{K}_{k-1} \end{bmatrix},
 	\label{A}
 \end{equation}
 then the following conclusion holds.
  \begin{theorem}
 	Suppose the latent dynamic system is given by (\ref{dynamic model}) and (\ref{g}).  The estimated latent dynamic model given in (\ref{h}) and (\ref{estimated dynamic model}) is learned by solving (\ref{minimizing}). If all eigenvalues of $\boldsymbol{A}$ in (\ref{A}) have negative real parts, then the controller (\ref{controller 1}) ensures that the real tracking error $\boldsymbol{e}\to \boldsymbol{0}$ as $t \to \infty$.
 	\label{theorem tracking}
 \end{theorem}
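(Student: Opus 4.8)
The plan is to work entirely in the learned coordinates $\hat{\boldsymbol{z}} = \tau(\boldsymbol{z})$, to exploit the fact that Theorem~\ref{theorem fully controlled} makes the learned model \emph{exact} rather than merely approximate in these coordinates, and to transfer back to the real coordinates only at the very end through the diffeomorphism $\tau$. First I would substitute the controller (\ref{controller 1}) into the exact learned dynamics (\ref{dot hat z}). Because the controller premultiplies the bracketed term by $\hat{\boldsymbol{B}}^{-1}(\hat{\boldsymbol{z}})$ while the last block of (\ref{dot hat z}) reintroduces the factor $\hat{\boldsymbol{B}}(\hat{\boldsymbol{z}})$, the two cancel exactly and yield $\dot{\hat{\boldsymbol{q}}}_{k-1} = \dot{\hat{\boldsymbol{q}}}_{d,k-1} + \sum_{i=0}^{k-1}\boldsymbol{K}_i(\hat{\boldsymbol{q}}_{d,i}-\hat{\boldsymbol{q}}_i)$. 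I would emphasize that this cancellation requires no agreement between the learned $\hat{F},\hat{\boldsymbol{B}}$ and the true $F,\boldsymbol{B}$: the identifiability result already guarantees that the learned vector field governs $\hat{\boldsymbol{z}}$ exactly, and that is precisely what makes feedback linearization legitimate in the latent space.

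Next I would introduce the error in learned coordinates, $\hat{\boldsymbol{e}}_i \doteq \hat{\boldsymbol{q}}_i - \hat{\boldsymbol{q}}_{d,i}$. The integrator-chain part of (\ref{dot hat z}) gives $\dot{\hat{\boldsymbol{q}}}_i = \hat{\boldsymbol{q}}_{i+1}$ for the controlled state; I must also verify the same relation for the reference, $\dot{\hat{\boldsymbol{q}}}_{d,i} = \hat{\boldsymbol{q}}_{d,i+1}$. This follows from the triangular structure of $\tau$ established in the proof of Theorem~\ref{theorem fully controlled}: each block $\hat{\boldsymbol{q}}_i$ of $\tau$ is the function of $(\boldsymbol{q},\dot{\boldsymbol{q}},\dots,\boldsymbol{q}^{(i)})$ obtained by iterated total time differentiation of $\tau_q(\boldsymbol{q})$, so evaluating $\hat{\boldsymbol{z}}_d = \tau(\boldsymbol{z}_d)$ along the smooth reference $\boldsymbol{q}_d(t)$ forces $\dot{\hat{\boldsymbol{q}}}_{d,i}=\hat{\boldsymbol{q}}_{d,i+1}$. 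Subtracting then gives $\dot{\hat{\boldsymbol{e}}}_i = \hat{\boldsymbol{e}}_{i+1}$ for $i<k-1$ and $\dot{\hat{\boldsymbol{e}}}_{k-1} = -\sum_{i=0}^{k-1}\boldsymbol{K}_i\hat{\boldsymbol{e}}_i$, which stacks into the linear system $\dot{\hat{\boldsymbol{e}}} = \boldsymbol{A}\hat{\boldsymbol{e}}$ with $\boldsymbol{A}$ the block companion matrix (\ref{A}). Since $\boldsymbol{A}$ is Hurwitz by hypothesis, $\hat{\boldsymbol{e}}\to\boldsymbol{0}$, indeed exponentially, as $t\to\infty$.

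The final step, and the one I expect to be the main obstacle, is transferring $\hat{\boldsymbol{e}}\to\boldsymbol{0}$ back to the real tracking error $\boldsymbol{e}=\boldsymbol{z}-\boldsymbol{z}_d$. The difficulty is that $\hat{\boldsymbol{e}} = \tau(\boldsymbol{z})-\tau(\boldsymbol{z}_d)$ is not $\tau(\boldsymbol{e})$ since $\tau$ is nonlinear, so vanishing of $\hat{\boldsymbol{e}}$ does not algebraically force vanishing of $\boldsymbol{e}$. I would argue as follows: $\boldsymbol{z}_d$ is bounded by assumption, hence $\hat{\boldsymbol{z}}_d=\tau(\boldsymbol{z}_d)$ is bounded, and since $\hat{\boldsymbol{e}}$ is bounded and decays, $\hat{\boldsymbol{z}}(t)$ remains in a compact set $\mathcal{K}$; because $\tau$ is a diffeomorphism by Theorem~\ref{theorem fully controlled}, $\tau^{-1}$ is continuous, hence uniformly continuous on $\mathcal{K}$, and carries the bounded trajectory $\hat{\boldsymbol{z}}(t)$ to a bounded $\boldsymbol{z}(t)$. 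Writing $\boldsymbol{z}(t)=\tau^{-1}(\hat{\boldsymbol{z}}(t))$ and $\boldsymbol{z}_d(t)=\tau^{-1}(\hat{\boldsymbol{z}}_d(t))$, uniform continuity of $\tau^{-1}$ on $\mathcal{K}$ converts $\|\hat{\boldsymbol{z}}(t)-\hat{\boldsymbol{z}}_d(t)\|\to 0$ into $\|\boldsymbol{z}(t)-\boldsymbol{z}_d(t)\|\to 0$, i.e. $\boldsymbol{e}\to\boldsymbol{0}$. I would flag that this is where any hidden hypotheses live — boundedness of the real state and uniform continuity of $\tau^{-1}$ over the operating region — and it is the part of the argument I would write most carefully.
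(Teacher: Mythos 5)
Your proposal is correct and follows essentially the same route as the paper: substitute the controller into the exact learned dynamics guaranteed by Theorem~\ref{theorem fully controlled}, obtain the linear error system $\boldsymbol{\dot{\hat e}}=\boldsymbol{A}\boldsymbol{\hat e}$ in latent coordinates, conclude $\boldsymbol{\hat e}\to\boldsymbol{0}$ from the Hurwitz hypothesis, and pull back through the diffeomorphism $\tau$. Your final step is actually more careful than the paper's, which simply asserts that $\tau$ being a diffeomorphism yields $\boldsymbol{e}\to\boldsymbol{0}$; your observation that this requires boundedness of the trajectories and uniform continuity of $\tau^{-1}$ on a compact set (since $\tau(\boldsymbol{z})-\tau(\boldsymbol{z}_d)\to\boldsymbol{0}$ does not by itself force $\boldsymbol{z}-\boldsymbol{z}_d\to\boldsymbol{0}$ for a nonlinear map) correctly fills a gap the paper leaves implicit.
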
	
 
	\begin{proof}
		 Let $\boldsymbol{\hat e}=\boldsymbol{\hat z}_d-\boldsymbol{\hat z}$. Theorem \ref{theorem fully controlled} ensures that $\boldsymbol{\dot {\hat z}}= \hat {\mathcal{F}}(\boldsymbol{\hat z}, \boldsymbol{u})$, and also $\forall i=0,\cdots,k-2$, $\boldsymbol{\dot {\hat q}}_{d,i}=\boldsymbol{\hat q}_{d,i+1}$. Then, the estimated closed-loop system is given as 
		\begin{equation}
			\boldsymbol{\dot {\hat e}} =\boldsymbol{A}\boldsymbol{\hat e}.
			\label{estimated closed-loop system}
		\end{equation}
		The estimated closed-loop system (\ref{estimated closed-loop system}) is a linear system with all eigenvalues of $\boldsymbol{A}$ having negative real parts. Hence for all initial states, the estimated tracking error $\boldsymbol{\hat z}_d-\boldsymbol{\hat z}=\boldsymbol{\hat e} \to \boldsymbol{0}$ as $t \to \infty$. Then since $\boldsymbol{\hat z}=\tau(\boldsymbol{z})$ and $\boldsymbol{\hat z}_d=\tau(\boldsymbol{z}_d)$ with $\tau(\cdot)$ being a diffeomorphism (see, Theorem \ref{theorem fully controlled}), the real tracking error $\boldsymbol{e}=\boldsymbol{z}_d-\boldsymbol{z}\to \boldsymbol{0}$ as $t \to \infty$. 
	\end{proof}

	If we consider regulation control to a desired state $\boldsymbol{z}_d \doteq [\boldsymbol{q}_d^T, \boldsymbol{0}, \cdots, \boldsymbol{0}]^T$, one can simply let $\forall i=1,\cdots,k-1$, $\boldsymbol{\hat q}_{d,i}=\boldsymbol{0}$ and $\boldsymbol{\dot {\hat q}}_{d,k-1}=\boldsymbol{0}$ in the controller (\ref{controller 1}). It is easy to verify that as $t \to \infty$, $\boldsymbol{\hat q}_d-\boldsymbol{\hat q} \to \boldsymbol{0}$. And since Theorem \ref{theorem fully controlled} ensures that $\boldsymbol{\hat q}_0=\tau_q(\boldsymbol{q})$ and $\tau_q(\cdot)$ is a diffeomorphism,  $\boldsymbol{q}_d-\boldsymbol{q}$ also tends to $\boldsymbol{0}$ as $t \to \infty$.

 \subsection{Extensions to Systems with Uncontrollable Latents}
 \label{Sec Extensions to Systems with Uncontrollable Latents}
 
 In complex environments, the observations may contain information of both controllable and uncontrollable components. If the uncontrollable components are invariant in the dataset, their information will be modeled in the representation function and hence will not be mixed with the estimated latents. However, if they are factors of variance (FoV) \cite{bengio2013representation} in the dataset, the estimated controllable latents may be entangled with them, which is undesirable. 
 
 These uncontrollable latents may be statical or dynamic. For statical latents, one can use methods in, e.g., \cite{von2021self,von2023nonparametric,ahuja2024multi}, as a pre-training method to isolate the controllable latents from the uncontrollable ones under suitable assumptions therein. In this work, we only deal with the dynamical uncontrollable latents denoted as $\boldsymbol{s} \in \mathbb{R}^l$. This motivation is similar to works considering exogenous block Markov decision process (EX-BMDP) \cite{efroni2021provably,lamb2022guaranteed,levine2024multistep}, but the settings are different in the sense that we consider dynamic systems with continuous state and action spaces (instead of MDP with finite state and action sets), which makes it suitable for continuous control.

  In this case, we need multiple environments where the dynamics of the uncontrollable latents are different. To be specific, we consider there are $l+1$ different training environments. The dynamics of the controllable latents $\boldsymbol{z}$ are the same (see equation (\ref{dynamic model})) across all the environments. $\forall i=1,\cdots,l+1$, the dynamics of the uncontrollable latent $\boldsymbol{s}$ in the $i$-th environment is
 \begin{equation} 
 		\boldsymbol{\dot s}=G_{i}(\boldsymbol{s}). 
 	\label{uncontrolled dynamics}
 \end{equation}  
 
 The dynamics of $\boldsymbol{s}$ in the test environment is 
  \begin{equation} 
 	\boldsymbol{\dot s}=G_{\mbox {test}}(\boldsymbol{s},t),
 	\label{uncontrolled dynamics test}
 \end{equation}  
 which is assumed to be stable.
 
Now the mixing function $g_{zs}: \mathbb{R}^{nk+l}\to \mathbb{R}^{m}$ is a function of both $\boldsymbol{z}$ and $\boldsymbol{s}$, i.e.,
 \begin{equation} 
 	\boldsymbol{x}=g_{zs}(\boldsymbol{z},\boldsymbol{s}),
 	\label{g zs}
 \end{equation}  
 where $g_{zs}(\cdot)$ is also assumed to be $k$ times differentiable and injective. And the expectation in (\ref{minimizing}) is now taken over  all the states $\boldsymbol{x}\in \mathcal{X}$, generated by all $\boldsymbol{z} \in  \mathbb{R}^{nk}$ and $\boldsymbol{s} \in \mathbb{R}^l$ through $g_{zs}(\cdot)$, and all $\boldsymbol{u} \in \mathbb{R}^{n}$.
 
  All other settings are the same as in the former section, and the following conclusions hold.
 
 \begin{theorem}
 	Suppose the latent dynamic system is given by (\ref{dynamic model}), (\ref{uncontrolled dynamics}) and (\ref{g zs}), and $\forall \boldsymbol{s} \in \mathbb{R}^{l}$, the matrix $[G_{2}(\boldsymbol{s})-G_{1}(\boldsymbol{s}),\cdots,G_{l+1}(\boldsymbol{s})-G_{1}(\boldsymbol{s})]$ is nonsingular. If we learn a model (\ref{h}) and (\ref{estimated dynamic model}) by solving (\ref{minimizing}) in all training environments, then the following conclusions hold.
 	\begin{itemize}
 	\item 1. $\boldsymbol{q}$ is identified up to an invertible componentwise transformation and $\tau_{zs}\doteq h\circ g_{zs}$ is a diffeomorphism that only depends on $\boldsymbol{z}$.
 	\item 2. If all eigenvalues of $\boldsymbol{A}$ in (\ref{A}) have negative real parts, then controller (\ref{controller 1}) ensures that the real tracking error $\boldsymbol{e}\to \boldsymbol{0}$ as $t \to \infty$ in the test environment given by (\ref{dynamic model}), (\ref{uncontrolled dynamics test}) and (\ref{g zs}).
 	\end{itemize}  
 	\label{theorem partially controlled}
 \end{theorem}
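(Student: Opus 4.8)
The plan is to mirror the two-part structure of Theorem~\ref{theorem fully controlled} and Theorem~\ref{theorem tracking}, with the essential new ingredient being a multi-environment argument that forces the representation to be blind to the uncontrollable latent $\boldsymbol{s}$. For the first conclusion I would begin exactly as in Theorem~\ref{theorem fully controlled}: solving (\ref{minimizing}) attains its global minimum $0$, since $g_{zs}$ is injective and therefore admits a left inverse recovering $\boldsymbol{z}$ from $\boldsymbol{x}$, so choosing $h$ to be that inverse and taking $\hat{\mathcal{F}}$ equal to the true vector field of (\ref{dynamic model}) makes the integrand vanish in every training environment simultaneously. Hence any minimizer satisfies $\boldsymbol{\dot{\hat z}} = \hat{\mathcal{F}}(\boldsymbol{\hat z}, \boldsymbol{u})$ pointwise in each of the $l+1$ environments. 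Writing $\boldsymbol{\hat z} = \tau_{zs}(\boldsymbol{z}, \boldsymbol{s})$ and differentiating in time gives, in the $i$-th environment, $\frac{\partial \tau_{zs}}{\partial \boldsymbol{z}}\boldsymbol{\dot z} + \frac{\partial \tau_{zs}}{\partial \boldsymbol{s}} G_i(\boldsymbol{s}) = \hat{\mathcal{F}}(\boldsymbol{\hat z}, \boldsymbol{u})$, where I use that the controllable dynamics $\boldsymbol{\dot z}$ and the learned field $\hat{\mathcal{F}}$ take the same value across environments at a common point $(\boldsymbol{z}, \boldsymbol{s}, \boldsymbol{u})$.

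The key step is then to subtract the $i=1$ identity from the $i$-th one: the $\boldsymbol{\dot z}$ and $\hat{\mathcal{F}}$ terms cancel, leaving $\frac{\partial \tau_{zs}}{\partial \boldsymbol{s}}\bigl(G_i(\boldsymbol{s}) - G_1(\boldsymbol{s})\bigr) = \boldsymbol{0}$ for $i = 2, \dots, l+1$. Stacking these into $\frac{\partial \tau_{zs}}{\partial \boldsymbol{s}}\,[\,G_2 - G_1, \dots, G_{l+1} - G_1\,] = \boldsymbol{0}$ and invoking the hypothesis that this $l \times l$ matrix is nonsingular for every $\boldsymbol{s}$ yields $\frac{\partial \tau_{zs}}{\partial \boldsymbol{s}} \equiv \boldsymbol{0}$, i.e. $\tau_{zs}$ depends on $\boldsymbol{z}$ alone. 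From here the problem collapses onto the fully controllable case: I have $\boldsymbol{\hat z} = \tau_{zs}(\boldsymbol{z})$ together with $\boldsymbol{\dot{\hat z}} = \hat{\mathcal{F}}(\boldsymbol{\hat z}, \boldsymbol{u})$, so the differentiate-in-$\boldsymbol{u}$ and peel-off-by-blocks argument of Theorem~\ref{theorem fully controlled} applies unchanged, producing a lower-triangular nonsingular Jacobian, the relation $\boldsymbol{\hat q}_0 = \tau_q(\boldsymbol{q})$ with $\tau_q$ componentwise invertible, and $\tau_{zs}$ (viewed as a map $\mathbb{R}^{nk} \to \mathbb{R}^{nk}$) a diffeomorphism, which is conclusion 1.

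For conclusion 2, the point is that $\boldsymbol{s}$-independence is precisely what transfers the identifiability to the unseen test environment. Because $\boldsymbol{\hat z} = \tau_{zs}(\boldsymbol{z})$ and the relation $\boldsymbol{\dot{\hat z}} = \hat{\mathcal{F}}(\boldsymbol{\hat z}, \boldsymbol{u})$ involve neither $\boldsymbol{s}$ nor the way it evolves, both continue to hold verbatim when the uncontrollable dynamics are replaced by (\ref{uncontrolled dynamics test}); likewise $\boldsymbol{\hat z}_d = h(\boldsymbol{x}_d) = \tau_{zs}(\boldsymbol{z}_d)$ is well defined independent of whichever $\boldsymbol{s}$ generates $\boldsymbol{x}_d$. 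I would then reuse the proof of Theorem~\ref{theorem tracking} directly: the controller (\ref{controller 1}) produces the estimated closed-loop system $\boldsymbol{\dot{\hat e}} = \boldsymbol{A}\boldsymbol{\hat e}$, which is Hurwitz by the eigenvalue hypothesis, so $\boldsymbol{\hat e} \to \boldsymbol{0}$, and the diffeomorphism property of $\tau_{zs}$ pushes this back to $\boldsymbol{e} = \boldsymbol{z}_d - \boldsymbol{z} \to \boldsymbol{0}$. The assumed stability of $G_{\mathrm{test}}$ enters only to keep $\boldsymbol{s}$, and hence the full trajectory, inside the bounded region on which $g_{zs}$, $h$, $F$, and $\boldsymbol{B}$ are well behaved.

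The main obstacle I anticipate is concentrated entirely in the differencing step: everything hinges on being able to evaluate the identities at a \emph{common} point $(\boldsymbol{z}, \boldsymbol{s}, \boldsymbol{u})$ across environments so that $\boldsymbol{\dot z}$ and $\hat{\mathcal{F}}(\boldsymbol{\hat z}, \boldsymbol{u})$ genuinely cancel. This requires the zero-loss relation to hold pointwise rather than merely in expectation, and it requires the training distributions in the different environments to share support in $(\boldsymbol{z}, \boldsymbol{s})$. I would justify the former from nonnegativity of the integrand at the global minimum together with continuity, and I would treat the latter as part of the data-coverage hypothesis implicit in (\ref{minimizing}). The nonsingularity assumption on $[\,G_2 - G_1, \dots, G_{l+1}-G_1\,]$ is then exactly the algebraic condition needed to pass from the $l$ differenced identities to $\frac{\partial \tau_{zs}}{\partial \boldsymbol{s}} = \boldsymbol{0}$.
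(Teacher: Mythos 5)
Your proposal is correct and follows essentially the same route as the paper's own proof: differentiate $\boldsymbol{\hat z}=\tau_{zs}(\boldsymbol{z},\boldsymbol{s})$ in time, subtract the identities across the $l+1$ training environments so that the $\boldsymbol{\dot z}$ and $\hat{\mathcal{F}}$ terms cancel, stack the differences and use the nonsingularity of $[G_2-G_1,\dots,G_{l+1}-G_1]$ to conclude $\boldsymbol{J}_s=\boldsymbol{0}$, and then reduce to Theorems \ref{theorem fully controlled} and \ref{theorem tracking}. Your added remarks on pointwise validity of the zero-loss relation and shared support across environments make explicit assumptions the paper leaves implicit, but they do not change the argument.
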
	
 
 \begin{proof}
 	 Letting $\boldsymbol{\hat z} =h\circ g_{zs} (\boldsymbol{z},\boldsymbol{s})= \tau_{zs}(\boldsymbol{z},\boldsymbol{s})$ and taking the time derivative of it give
 \begin{align}
 	\boldsymbol{\dot {\hat z}}=&\boldsymbol{J}_{zs}(\boldsymbol{z},\boldsymbol{s})\begin{bmatrix}\boldsymbol{\dot z}\\\boldsymbol{\dot s} \end{bmatrix} \\
 	\boldsymbol{\dot {\hat z}}=& \boldsymbol{J}_z(\boldsymbol{z},\boldsymbol{s})\boldsymbol{\dot z} + \boldsymbol{J}_s(\boldsymbol{z},\boldsymbol{s})\boldsymbol{\dot s}\\ 
 	\boldsymbol{\dot {\hat z}}=& \boldsymbol{J}_z(\boldsymbol{z},\boldsymbol{s})\boldsymbol{\dot z} + \boldsymbol{J}_s(\boldsymbol{z},\boldsymbol{s}) G_{i}(\boldsymbol{s}),
 \end{align}
 where $\boldsymbol{J}_{zs}(\boldsymbol{z},\boldsymbol{s})\doteq [\boldsymbol{J}_z(\boldsymbol{z},\boldsymbol{s}),\boldsymbol{J}_s(\boldsymbol{z},\boldsymbol{s})]$ is the Jacobian of $\tau_{zs}$.
 
 Since above equation holds for all $i=1,\cdots,l+1$, then we have
 \begin{align} 
 	\boldsymbol{J}_s(\boldsymbol{z},\boldsymbol{s}) [G_{2}(\boldsymbol{s})-G_{1}(\boldsymbol{s}),\cdots,G_{l+1}(\boldsymbol{s})-G_{1}(\boldsymbol{s})]=\boldsymbol{0}.
 \end{align}
 
 Since by assumption, the matrix $[G_{2}(\boldsymbol{s})-G_{1}(\boldsymbol{s}),\cdots,G_{k+1}(\boldsymbol{s})-G_{1}(\boldsymbol{s})]$ is nonsingular, we have $\boldsymbol{J}_s(\boldsymbol{z},\boldsymbol{s})=\boldsymbol{0}$, which means $\boldsymbol{\hat z}= \tau_{zs}(\boldsymbol{z},\boldsymbol{s})$ is not a function of $\boldsymbol{s}$, and hence is only a function of $\boldsymbol{z}$. Then, we can derive conclusion 1 by following similar procedures as in the proof of Theorem \ref{theorem fully controlled}.
 
 Since in the test environment, the dynamics of $\boldsymbol{s}$ are assumed to be stable, and both $\boldsymbol{\hat z}$ and $\boldsymbol{z}$ are independent of $\boldsymbol{s}$, we can derive conclusion 2 by following similar procedures as in the proof of Theorem \ref{theorem tracking}.
 \end{proof}  
\section{Results}

 In this section, we provide the experimental results based on an attitude dynamic model of a rigid-body spacecraft.
 
\subsection{Experimental Setup}
 The experiments are conducted by first generating training data according to Section \ref{Latent Dynamic Models and Data Generation}. Then, train the estimated latent dynamic models as in Section \ref{Estimated Latent Dynamic Models}. Finally, use the controller in Section \ref{Controllers} to ensure the tracking property of closed-loop systems.
 \subsubsection{Latent Dynamic Models and Data Generation}
 \label{Latent Dynamic Models and Data Generation}
  We consider the following latent dynamic model
   \begin{equation}
  	\left\{
  	\begin{array}{l}
  		\dot \theta_x=\omega_x\\
  		\dot \theta_y=\omega_y\\
  		\dot \theta_z=\omega_z\\
  		J_x\dot \omega_x=\left(J_y-J_z\right)\omega_y\omega_z+T_x\\
  		 J_y\dot \omega_y=\left(J_z-J_x\right)\omega_z\omega_x+T_y\\
  		J_z \dot \omega_z=\left(J_x-J_y\right)\omega_x\omega_y+T_z,
  	\end{array}
  	\right.  
  	\label{attitude mod}
  \end{equation} 
  with
  \begin{equation}
  	\left\{
  	\begin{array}{l}
  		\boldsymbol{x}=g(\boldsymbol{z})\\
  		\boldsymbol{z}=[\theta_x,\theta_y,\theta_z,\omega_x,\omega_y,\omega_z]^T\\
  		\boldsymbol{u}=[T_x,T_y,T_z]^T.
  	\end{array}
  	\right.  
  \end{equation} 
   
  This model is adapted from a standard attitude dynamic model of a rigid-body spacecraft \cite[p.169]{xie2022spacecraft}. The parameters are set to $J_x=0.8$, $J_y=1.0$, $J_z=1.2$.
  
   Mixing function $g$ is approximated by a randomly initialized  multi-layer perceptron (MLP) with two hidden layers by following \cite{zimmermann2021contrastive, ahuja2022weakly}. The hidden dimensions are the same as the input dimension, and the activation functions for hidden layers are SmoothLeakyReLU $\sigma(x)=0.2x+0.8\log(1+e^x)$.
  { The output dimension is 50, and hence $\boldsymbol{x} \in \mathbb{R}^{50}$.}
  
  We let each component of the initial state  $\boldsymbol{z}(t_0)$ and $\boldsymbol{u}(t_0)$ follows an uniform distribution $U(-1,1)$, and calculate the next-step state $\boldsymbol{z}(t_0+\Delta t)$ according to (\ref{attitude mod}). { Then calculate $\boldsymbol{x}(t_0)=g(\boldsymbol{z}(t_0))$ and $\boldsymbol{x}(t_0+\Delta t)=g(\boldsymbol{z}(t_0+\Delta t))$.} The sampling period $\Delta t$ is set as $0.01$(s), and we use the classic Runge–Kutta method (RK4) for numerical simulations. 
  
   Both in training and control processes, the dynamic model (\ref{attitude mod}) and nonlinear mixing function $g$ are unknown, and  $\boldsymbol{z}=[\theta_x,\theta_y,\theta_z,\omega_x,\omega_y,\omega_z]^T$ is not accessible.  Only the observation $\boldsymbol{x} \in \mathbb{R}^{50}$ and $\boldsymbol{u}=[T_x,T_y,T_z]^T$ are attainable. We generate 300,000 samples ($\boldsymbol{x}(t_0),\boldsymbol{x}(t_0+\Delta t),\boldsymbol{u}(t_0)$) for training.
   
   {We first provide the results for the controllable system without noises described above in Section \ref{Results for Controllable Systems without Noises}. Then we consider the effect of observation noises, control deviations, and uncontrollable latents. 
   
   For systems with observation noises, two different noises, namely fast-varying noises and slow-varying noises, are considered. For fast-varying noises, the observations $\boldsymbol{x}(t_0)$ and $\boldsymbol{x}(t_0+\Delta t)$ are perturbed by adding zero mean Gaussian noises $\tilde{\boldsymbol{x}}_0 \in \mathbb{R}^{50}$ and $\tilde{\boldsymbol{x}}_1 \in \mathbb{R}^{50}$, respectively.  For slow-varying noises, the noise variables are the same during the sampling time $\Delta t=0.01$s, i.e., $\tilde{\boldsymbol{x}}_0= \tilde{\boldsymbol{x}}_1$. These noises are added both in the training dataset and during the control processes. The results are given in Section \ref{Results for Controllable Systems with Observation Noises}.

   For systems with control deviations, we perturb $\boldsymbol{u}$ by adding a noise variable $\tilde{\boldsymbol{u}} \in \mathbb{R}^{3}$, where each component of $\tilde{\boldsymbol{u}}$ follows the zero mean Gaussian distribution. The results are given in Section \ref{Results for Controllable Systems with Control Deviations}.

   For systems with uncontrollable latents, we consider another one-dimensional latent ${s} \in \mathbb{R}$. In two different training environment, the dynamics of ${s}$ are given as ${\dot s} = 5-{s}$ and ${\dot s} =-5+5 {s}^2 $. In the test environment, the dynamics of ${s}$ is given as  $ {s} = -0.5\cos(t)$. And in this case, the mixing function $g(\boldsymbol{z})$ is replaced by $g_{zs}(\boldsymbol{z},{s})$. All other settings remain the same. The results are given in Section \ref{Results for Systems with Uncontrollable Latents}.}

  \subsubsection{Estimated Latent Dynamic Models}
  \label{Estimated Latent Dynamic Models}
  The estimated latent dynamic model is given in (\ref{h}) and (\ref{estimated dynamic model}). We use a 3-layer fully connected neural network as the representation function $h$, and the activation functions for hidden layers are Leaky-ReLU with a negative slope of 0.2. $\hat F(\cdot)$ and $\boldsymbol{\hat B}(\cdot)$ are approximated by 3-layer fully connected neural networks with Leaky-ReLU (0.2) as activation functions of hidden layers. For $\boldsymbol{\hat B}(\cdot)$, we use ReLU with a bias of 0.1 as activation functions for the output layer.
  The model is learned by solving (\refeq{minimizing}), with $\boldsymbol{\dot {\hat z}}$ approximated by using (\ref{forward difference}). 
  
  \subsubsection{Controllers}
  \label{Controllers}
  The controller is given by (\ref{controller 1}), with $\boldsymbol{K}_0=50\boldsymbol{I}_3$ and $\boldsymbol{K}_1=50\boldsymbol{I}_3$. The reference $\boldsymbol{x}_d$ is generated by
  \begin{equation}
  	\left\{
  	\begin{array}{l}
  		\boldsymbol{x}_d=g(\boldsymbol{z}_d)\\
  		\boldsymbol{z}_d=[r_x,r_y,r_z,\dot r_x,\dot r_y,\dot r_z]^T\\
  		\hspace{13pt}=[0.8-0.4e^{-t},0.6-1.2e^{-t},0.6+0.3\sin(t),0.4e^{-t},1.2e^{-t},0.3\cos(t)]^T
  	\end{array}
  	\right.  
  \end{equation}

     \subsection{Experimental Results} 
      
      \subsubsection{Results for Controllable Systems without Noises}
      \label{Results for Controllable Systems without Noises}
      The tracking property for the fully controllable system is given in Figure \ref{fig:0}, and the loss function curve is reported in Figure \ref{loss function curve}.
       \begin{figure}[h]
      	\centering
      	\begin{subfigure}{0.4\textwidth}
      		\includegraphics[scale=0.23]{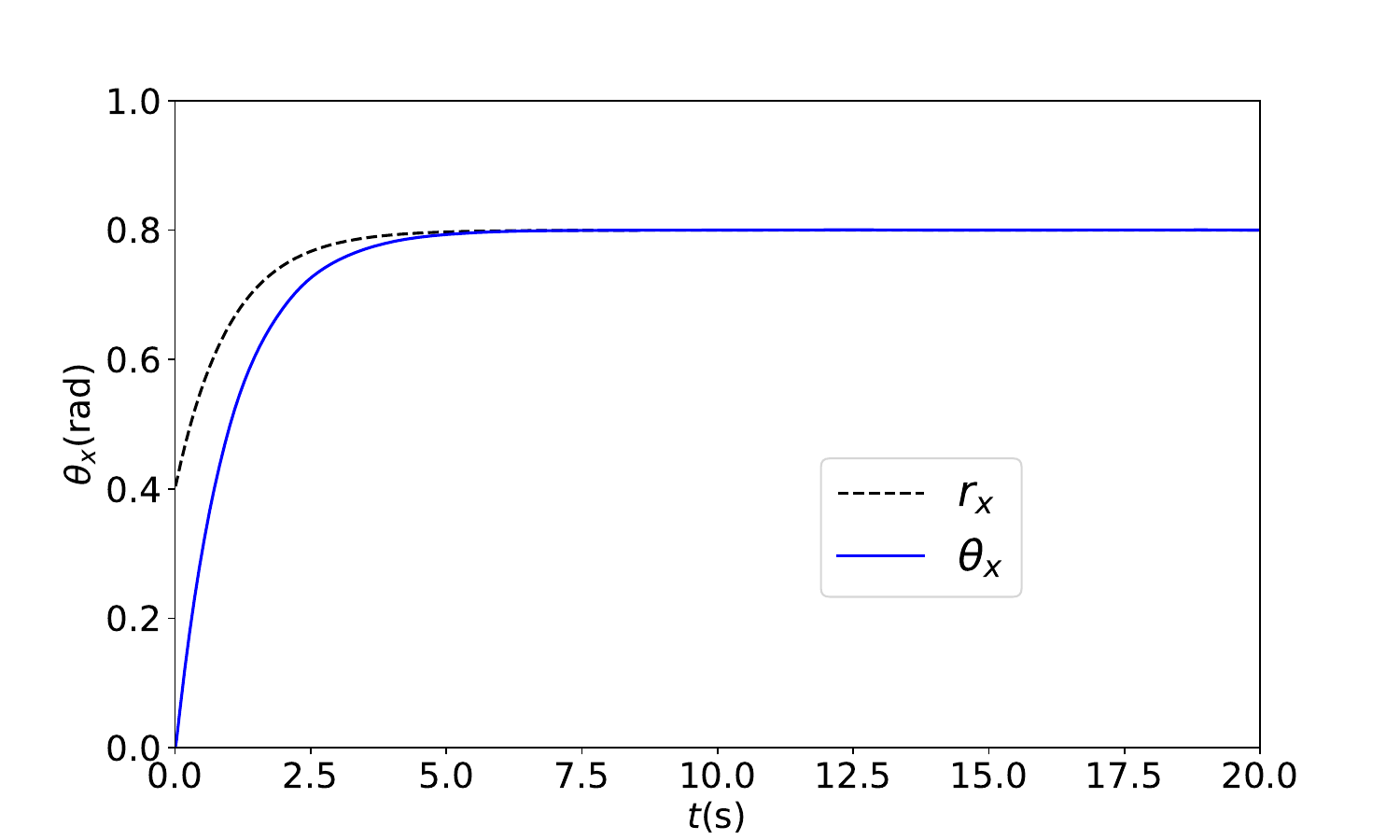}
      		\caption{\label{fig:01}}
      	\end{subfigure}
      	\begin{subfigure}{0.4\textwidth}
      		\includegraphics[scale=0.23]{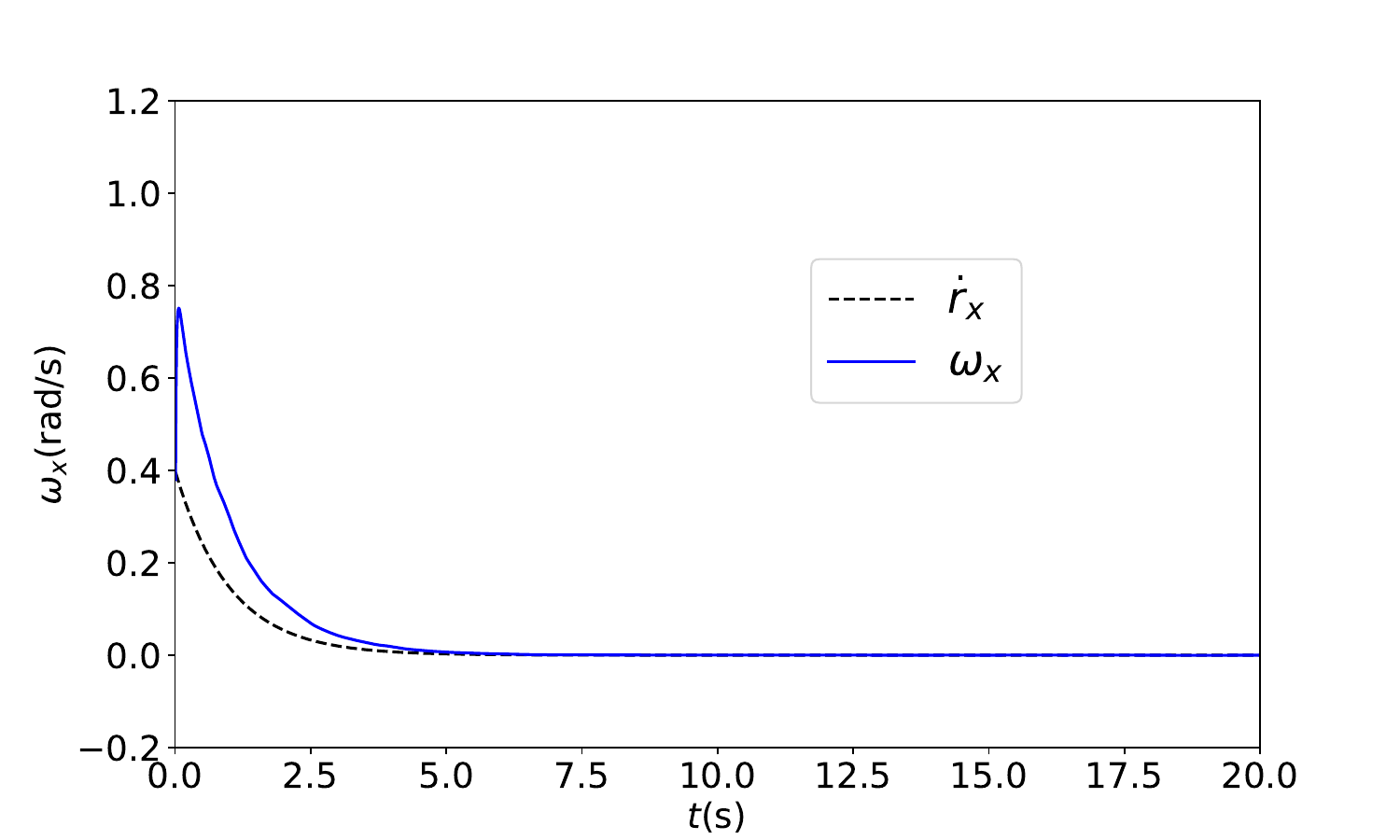}
      		\caption{\label{fig:04}}
      	\end{subfigure}
      	\begin{subfigure}{0.4\textwidth}
      		\includegraphics[scale=0.23]{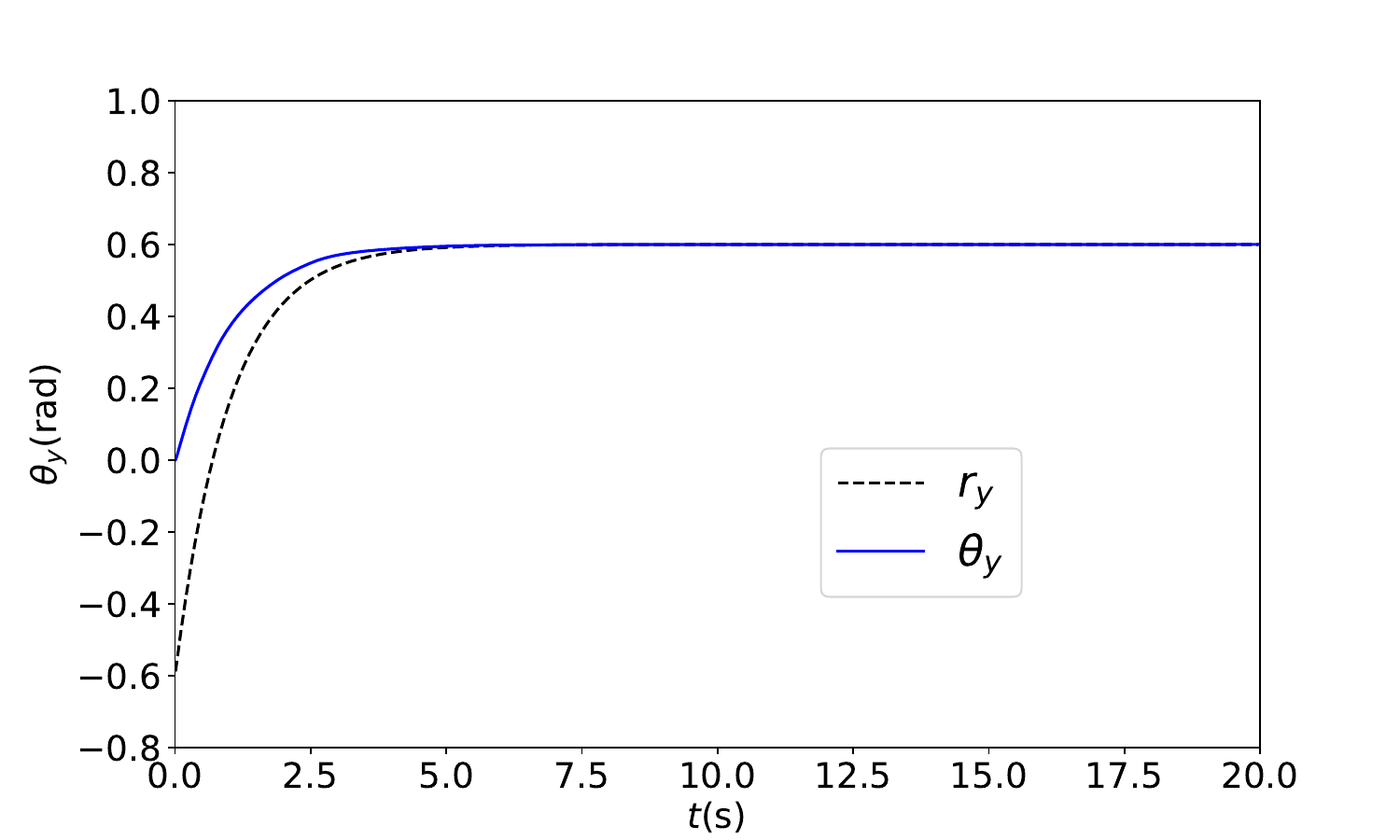}
      		\caption{\label{fig:02}}
      	\end{subfigure}
      	\begin{subfigure}{0.4\textwidth}
      		\includegraphics[scale=0.23]{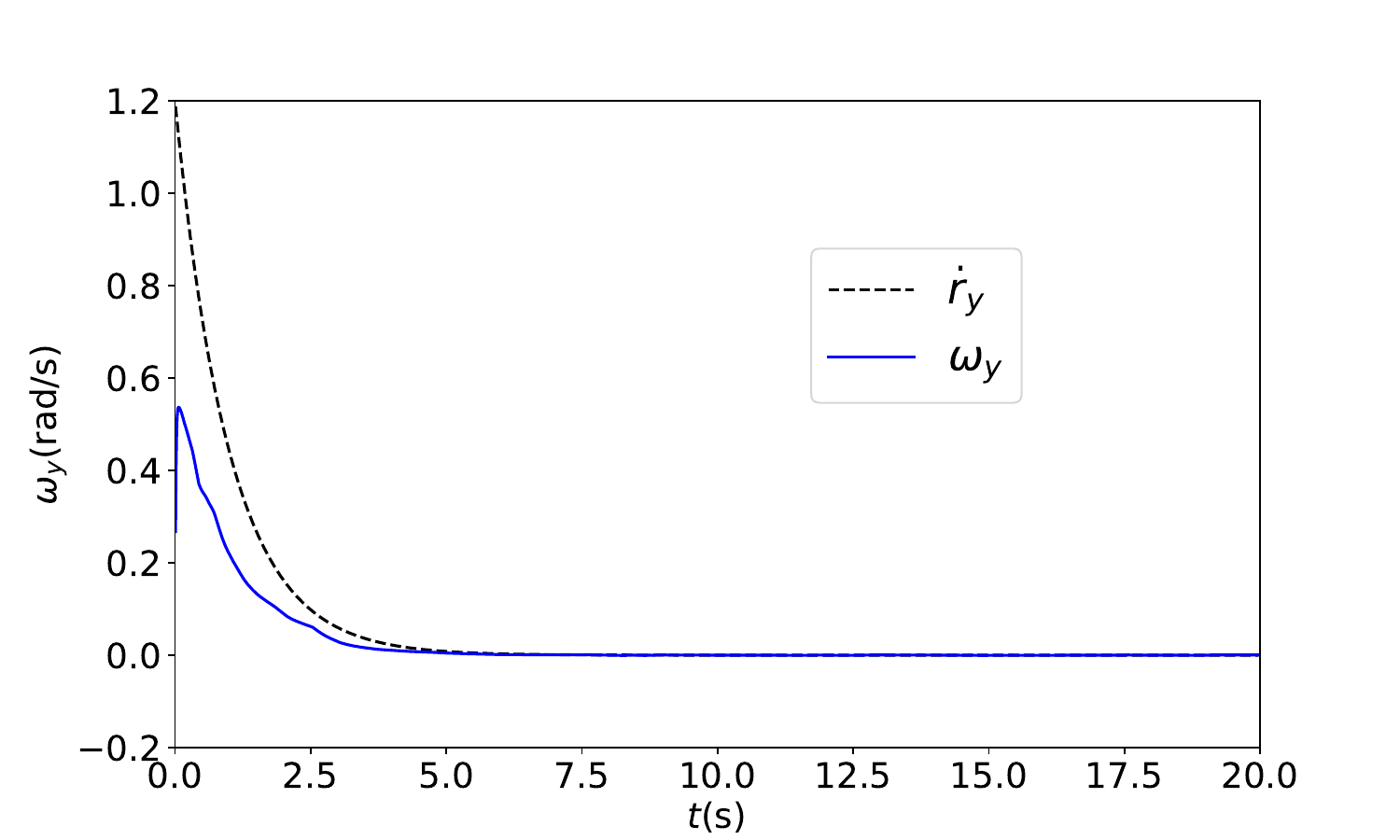}
      		\caption{\label{fig:05}}
      	\end{subfigure}
      	\begin{subfigure}{0.4\textwidth}
      		\includegraphics[scale=0.23]{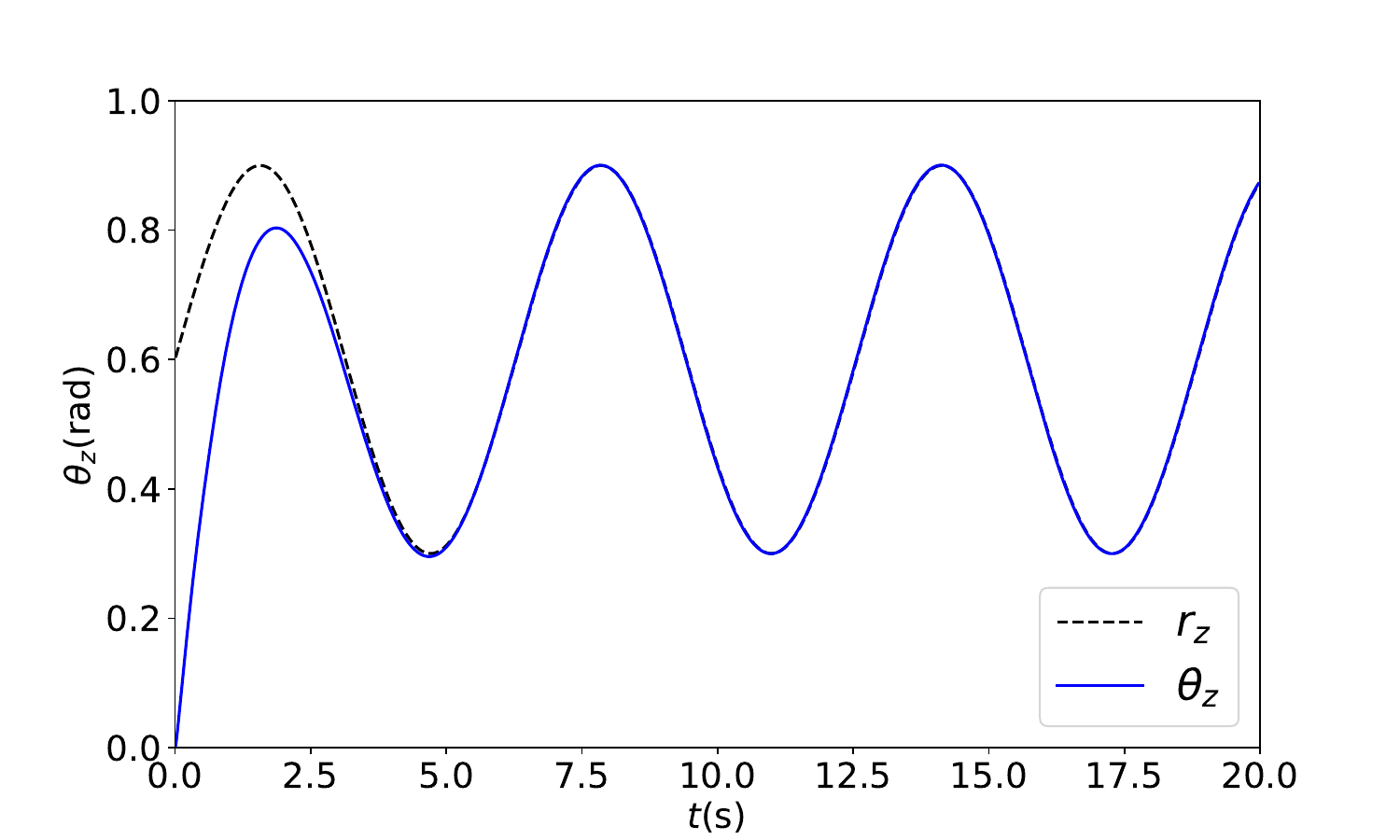}
      		\caption{\label{fig:03}}
      	\end{subfigure}
      	\begin{subfigure}{0.4\textwidth}
      		\includegraphics[scale=0.23]{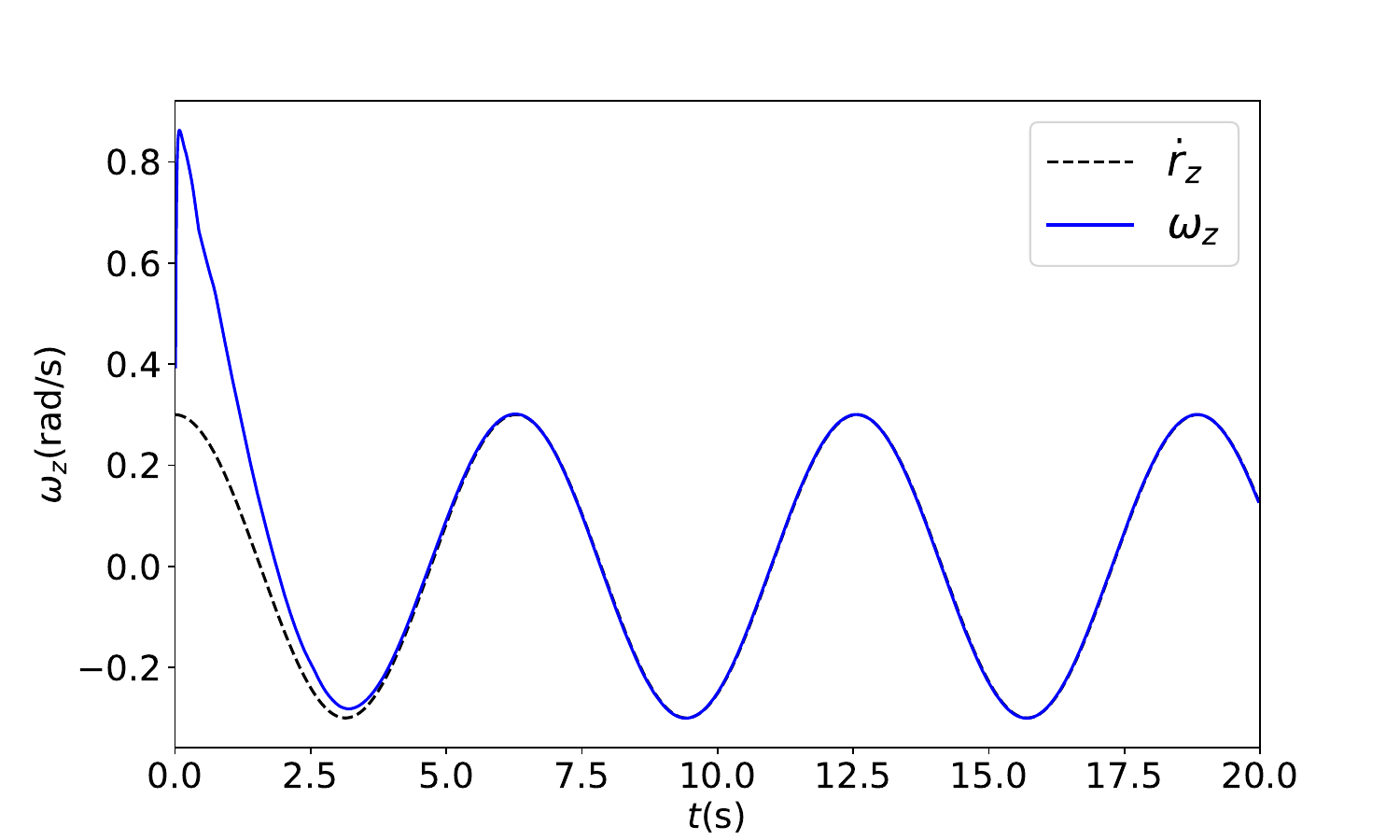}
      		\caption{\label{fig:06}}
      	\end{subfigure}
      	\caption{Tracking property for the fully controllable system. Time evolution of: (\subref{fig:01}) $\theta_x$, (\subref{fig:04}) $\omega_x$, (\subref{fig:02}) $\theta_y$ (\subref{fig:05}) $\omega_y$, (\subref{fig:03}) $\theta_z$, (\subref{fig:06}) $\omega_z$.}
      	\label{fig:0}
      \end{figure} 
      
      \begin{figure}[h]
      	\centering 
      	\includegraphics[scale=0.3]{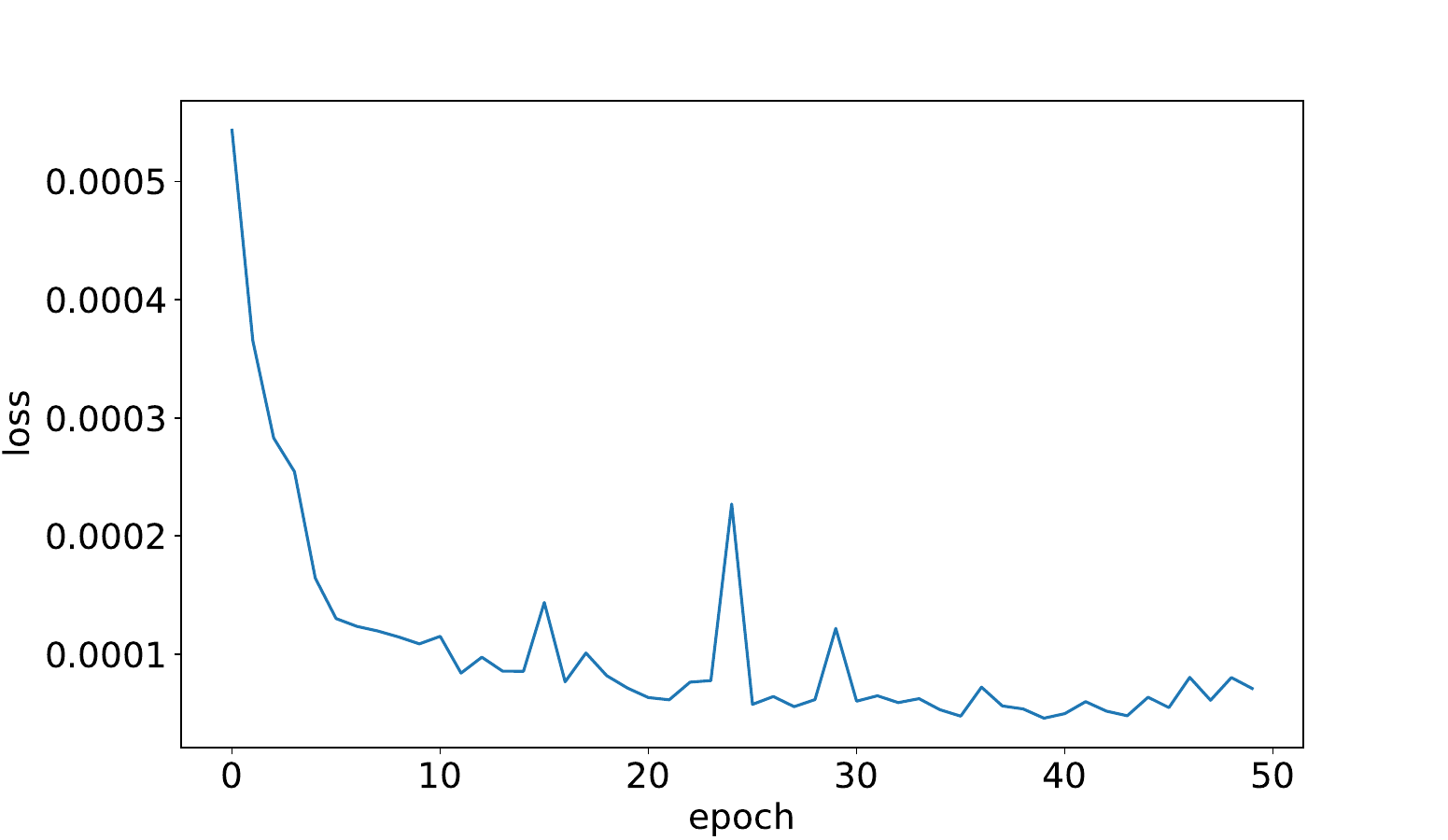} 
      	\caption{loss function curve} 
      	\label{loss function curve}
      \end{figure}
            
       {The convergence time is 5.0 (s), and there is almost no overshoot. We report the tracking errors based on the infinite norm of the error signal in the period $t \in [10,20]$ (s). The attitude tracking error is $[0.0006,0.0004,0.0029]$ (rad), and the attitude angular velocity tracking error is $[0.0007,0.0009,0.0034]$ (rad). The results show that the proposed controller ensures that the closed-loop system tracks the given reference signal well.}
      
     {
	\subsubsection{Results for Controllable Systems with Observation Noises}
  	\label{Results for Controllable Systems with Observation Noises}
  	  
  	  We report the attitude tracking error in Table \ref{Table Results for Controllable Systems with Observation Noises}. 
   \begin{table}[h]
  	\caption{Results for Controllable Systems with Observation Noises.}
  	\label{Table Results for Controllable Systems with Observation Noises}
  	\centering
  	\begin{tabular}{cccc}
  		\hline
  		$\sigma_{x,train}$&Noise Types&$\sigma_{x,control}$&Tracking Errors ($ rad$)\\
  		\hline
  		0 & - &0.001 &[0.0017,\ 0.0010,\ 0.0039] \\ 
  		0 & - &0.01 &[0.0112,\ 0.0112,\ 0.0191] \\ 
  		0 & - &0.1 &[0.1780,\ 0.2176,\ 0.1585] \\
  		0.01 & slow-varying &0 &[0.0004,\ 0.0004,\ 0.0029] \\
  		0.1 & slow-varying  &0 &[0.0081,\ 0.0130,\ 0.0100] \\
  		0.1 & slow-varying  &0.1 &[0.0699,\ 0.2641,\ 0.2188] \\
  		0.001 & fast-varying &0 &[0.0067,\ 0.0113,\ 0.0144] \\ 
  		0.01& fast-varying &0 &unstable \\ 
  		0.01& fast-varying &0.1 &unstable \\ 
  		\hline
  	\end{tabular}  
  \end{table}
  
   In Table \ref{Table Results for Controllable Systems with Observation Noises}, $\sigma_{x,train}$ and $\sigma_{x,control}$ denote the standard deviation of the observation noises in the training and control processes, respectively. Noise types indicate the type of the training observation noises, while the type of the noises in the control process is always fast-varying. The results show that slow-varying training observation noises have little effect on the control performance while fast-varying training observation noises are easy to destabilize the system. One possible explanation is that the variation of fast-varying noises covers the effect of control during the training process. The noises in the control process also have some effect on the control performance since they can be regarded as a kind of measurement noise.
  
	\subsubsection{Results for Controllable Systems with Control Deviations}
  	\label{Results for Controllable Systems with Control Deviations}
     The attitude tracking errors for controllable systems with control deviations are given in Table \ref{Table Results for Controllable Systems with Control Deviations}, where $\sigma_{u,train}$ and $\sigma_{u,control}$ denote the standard deviation of $\tilde{\boldsymbol{u}}$ in the training and control processes, respectively.
  
   \begin{table}[h]
   	\caption{Results for Controllable Systems with Control Deviations.}
   	\label{Table Results for Controllable Systems with Control Deviations}
   	\centering
   	\begin{tabular}{cccc}
   		\hline
   		$\sigma_{u,train}$&$\sigma_{u,control}$&Tracking Errors ($ rad$)\\
   		\hline 
   		0 &  0.1 &[0.0008,\ 0.0007,\ 0.0030] \\ 
   		0 &  1 &[0.0053,\ 0.0031,\ 0.0042] \\
   		0.1 &  0 &[0.0007,\ 0.0007,\ 0.0028] \\
   		1 &  0 &[0.0189,\ 0.0065,\ 0.0183] \\ 
   		0.1 &  0.1 &[0.0010,\ 0.0008,\ 0.0030] \\ 
   		1 &  1 &[0.0225,\ 0.0157,\ 0.0216] \\ 
   		\hline
   	\end{tabular}  
   \end{table}
   The results show that the noises in control channels will affect the control performance, but limited control deviations will not make the system unstable, which means the closed-loop system has some degree of robustness to the control deviation.}
 
   \subsubsection{Results for Systems with Uncontrollable Latents}
   \label{Results for Systems with Uncontrollable Latents}
   For systems with uncontrollable latents, the results for training in a single environment and two different environments are given in Figure \ref{fig:1} and Figure \ref{fig:2}, respectively. For the experiment in a single environment, we generate 600,000 samples for training using (\ref{attitude mod}) and ${\dot s} = 5-{s}$. For the experiment in two environments, 300,000 training samples are generated in each environment.

 \begin{figure}[h]
	\centering
	\begin{subfigure}{0.4\textwidth}
		\includegraphics[scale=0.23]{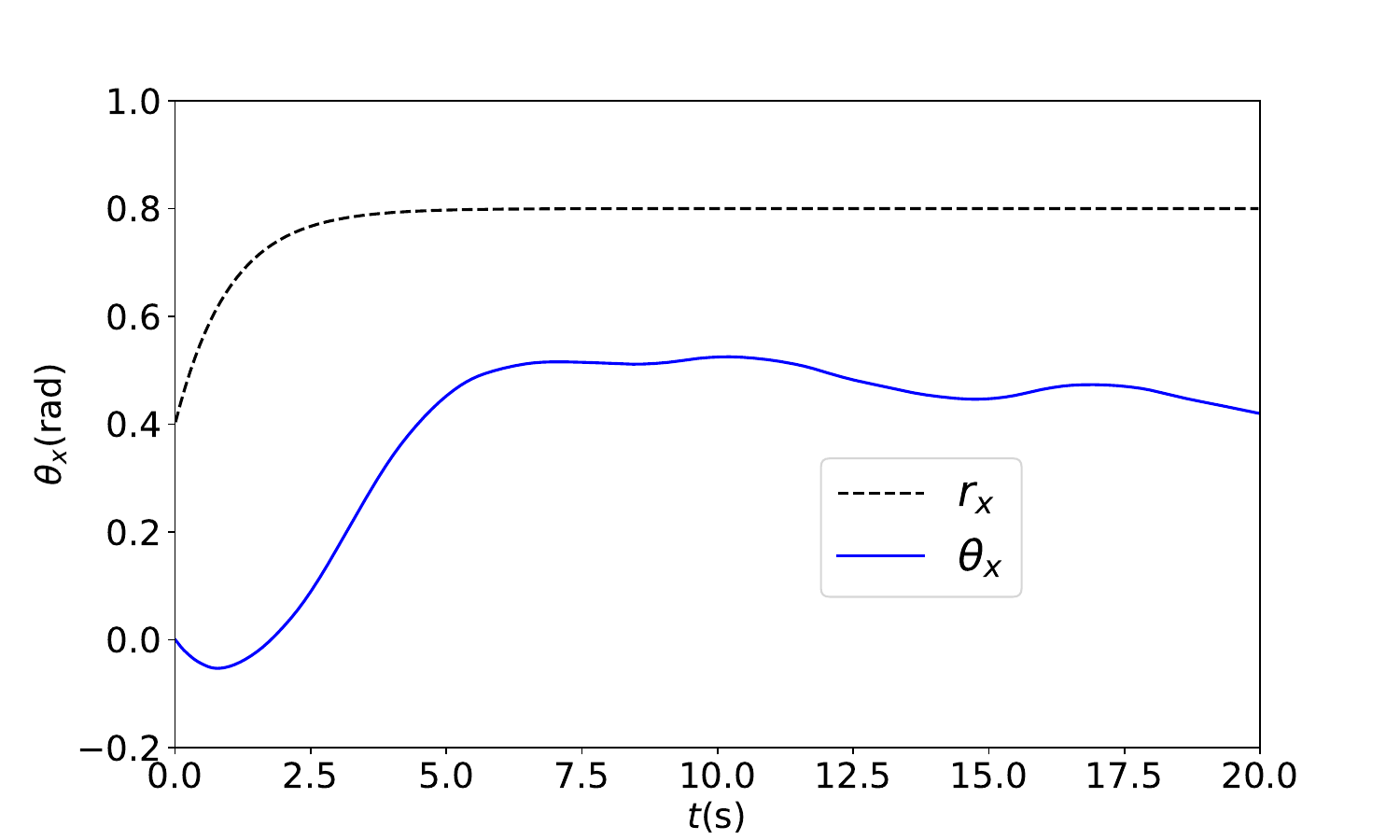}
		\caption{\label{fig:11}}
	\end{subfigure}
	\begin{subfigure}{0.4\textwidth}
		\includegraphics[scale=0.23]{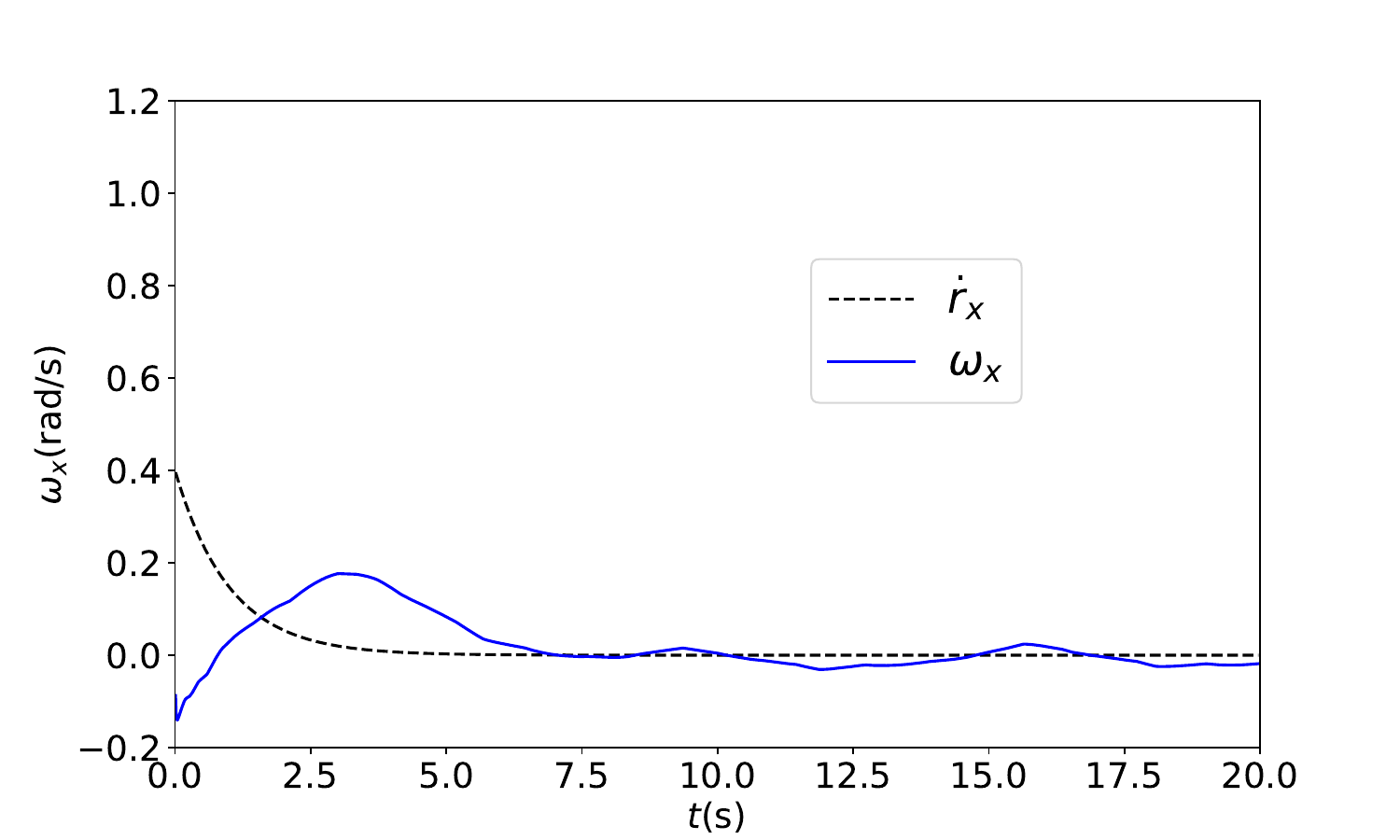}
		\caption{\label{fig:14}}
	\end{subfigure}
	\begin{subfigure}{0.4\textwidth}
		\includegraphics[scale=0.23]{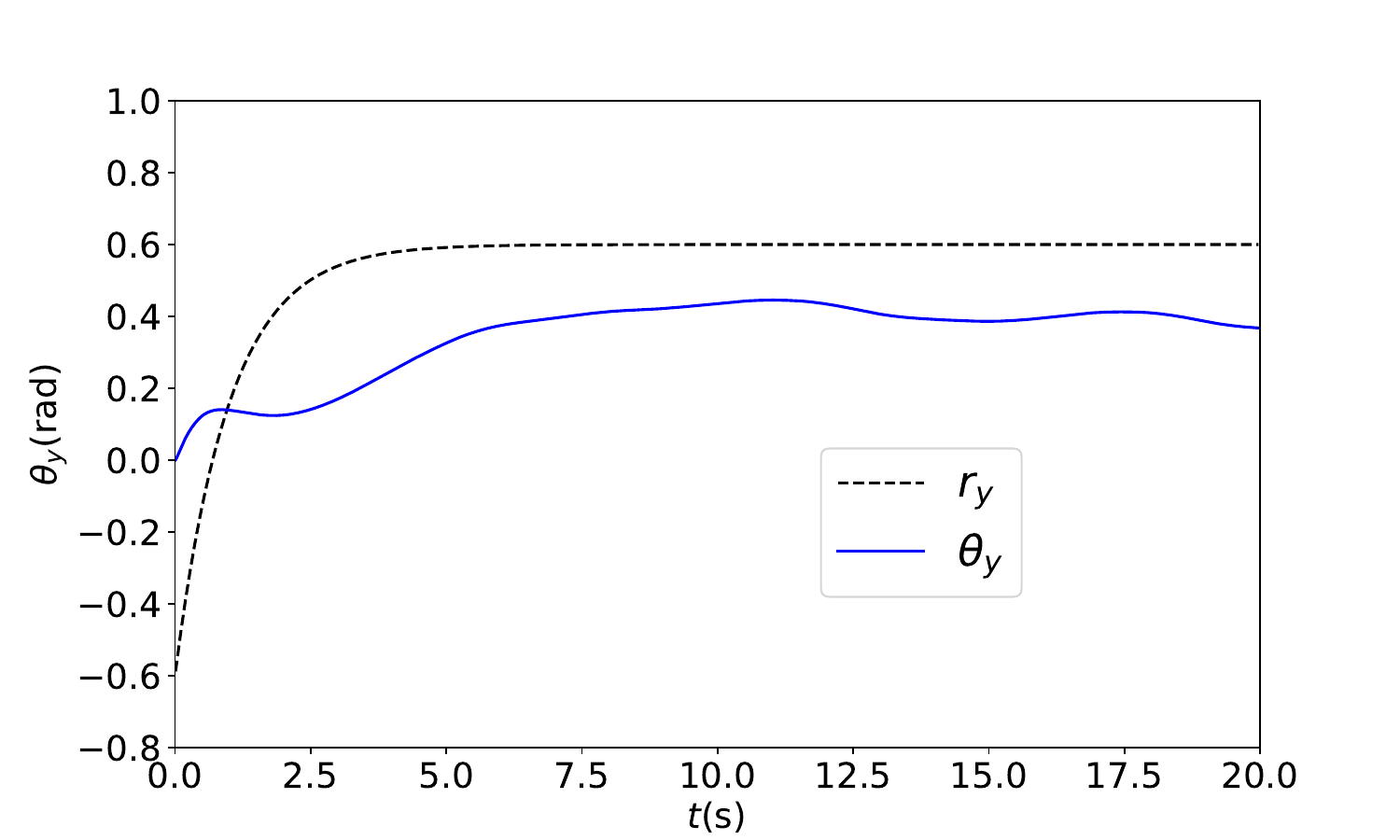}
		\caption{\label{fig:12}}
	\end{subfigure}
	\begin{subfigure}{0.4\textwidth}
		\includegraphics[scale=0.23]{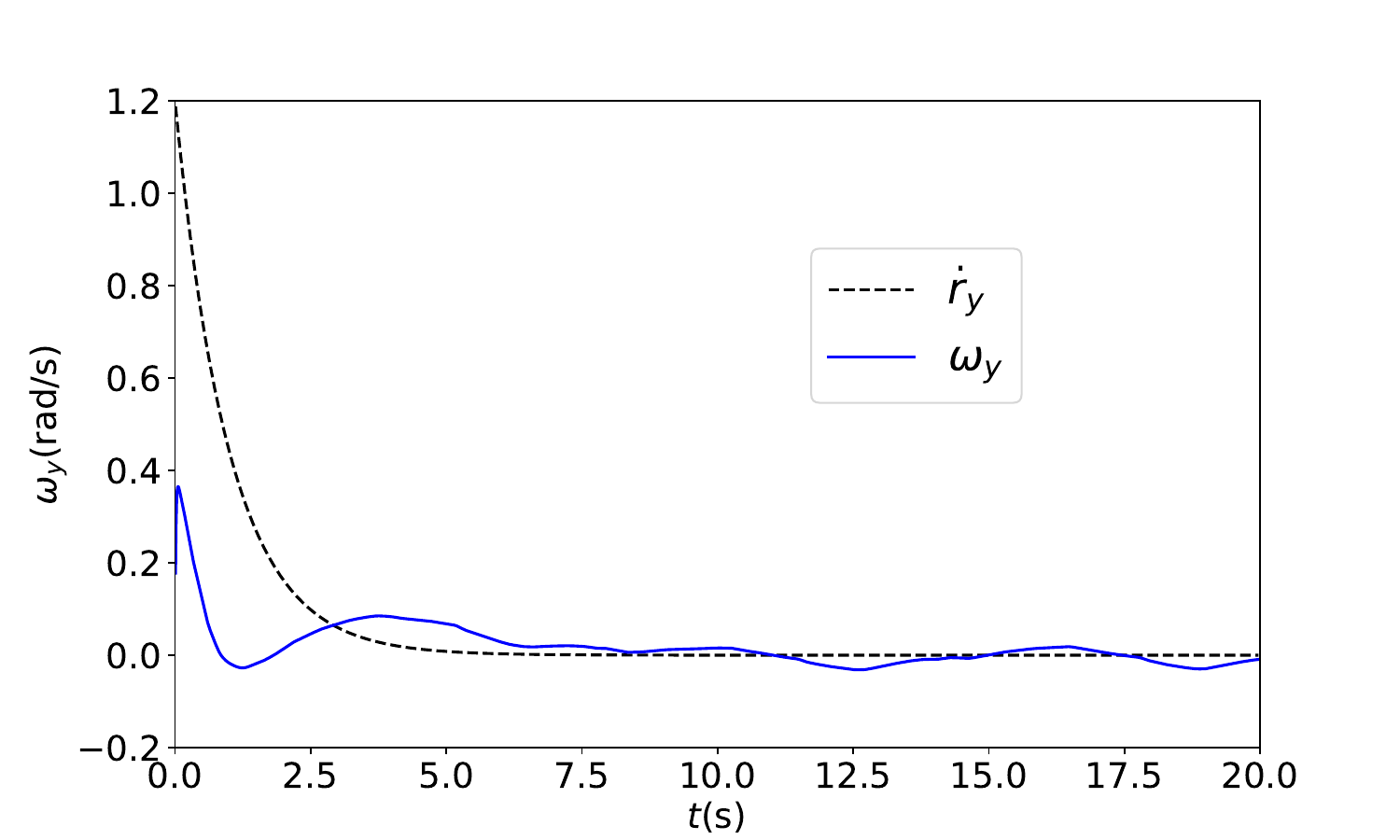}
		\caption{\label{fig:15}}
	\end{subfigure}
	\begin{subfigure}{0.4\textwidth}
		\includegraphics[scale=0.23]{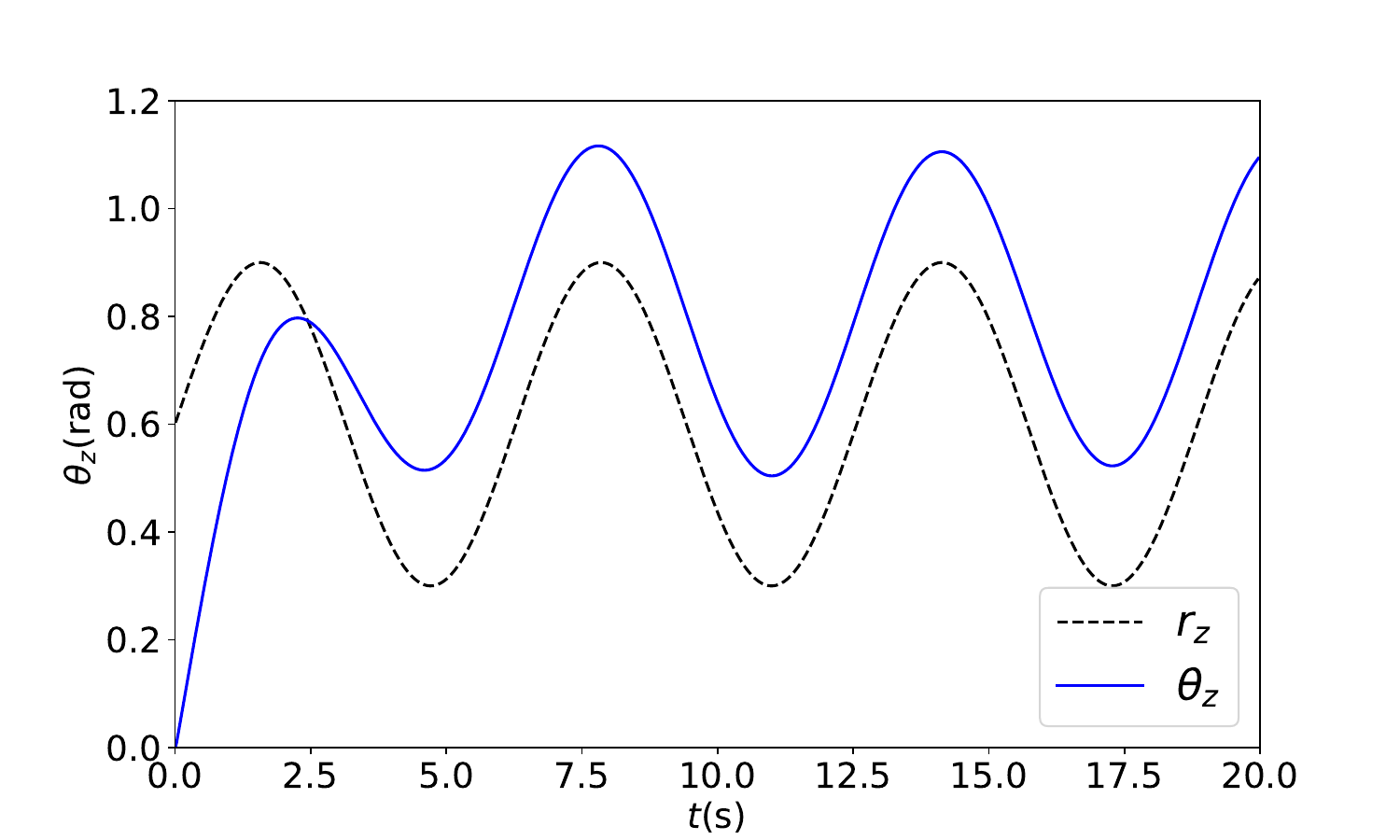}
		\caption{\label{fig:13}}
	\end{subfigure}
	\begin{subfigure}{0.4\textwidth}
		\includegraphics[scale=0.23]{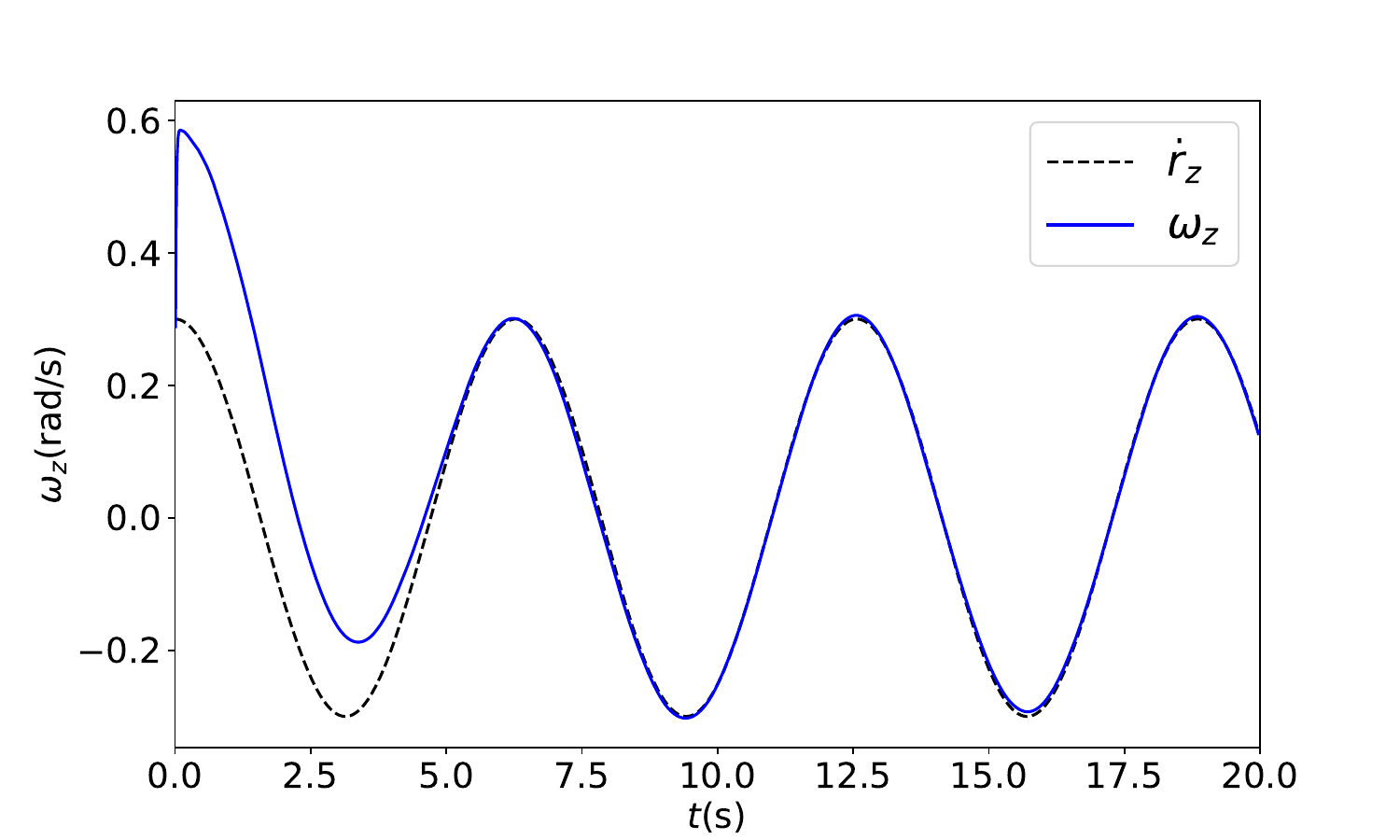}
		\caption{\label{fig:16}}
	\end{subfigure}
	\caption{Tracking property in the test environment after training in a single environment.  Time evolution of: (\subref{fig:01}) $\theta_x$, (\subref{fig:04}) $\omega_x$, (\subref{fig:02}) $\theta_y$ (\subref{fig:05}) $\omega_y$, (\subref{fig:03}) $\theta_z$, (\subref{fig:06}) $\omega_z$.}
	\label{fig:1}
\end{figure}  
\begin{figure}[h]
	\centering
	\begin{subfigure}{0.4\textwidth}
		\includegraphics[scale=0.23]{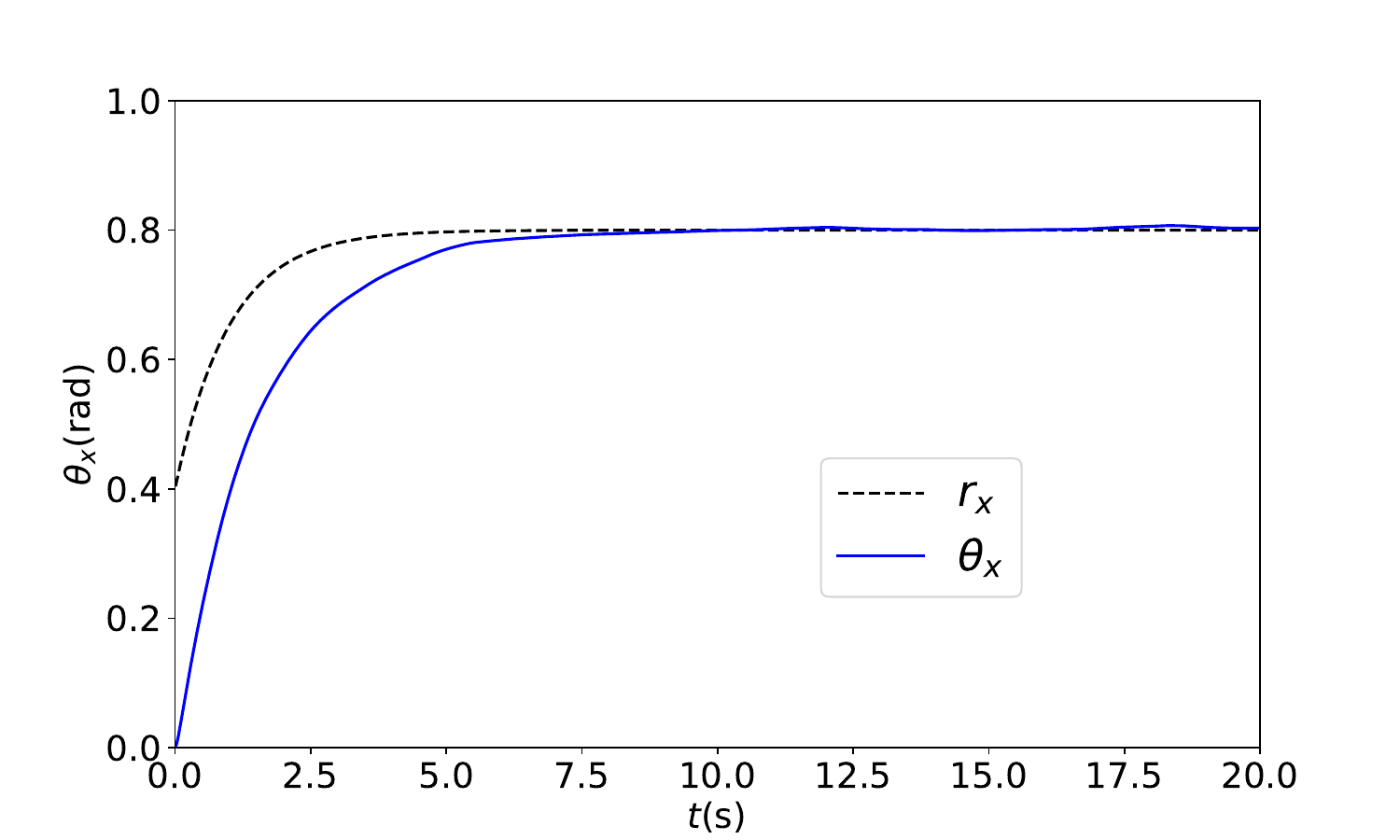}
		\caption{\label{fig:21}}
	\end{subfigure}
	\begin{subfigure}{0.4\textwidth}
		\includegraphics[scale=0.23]{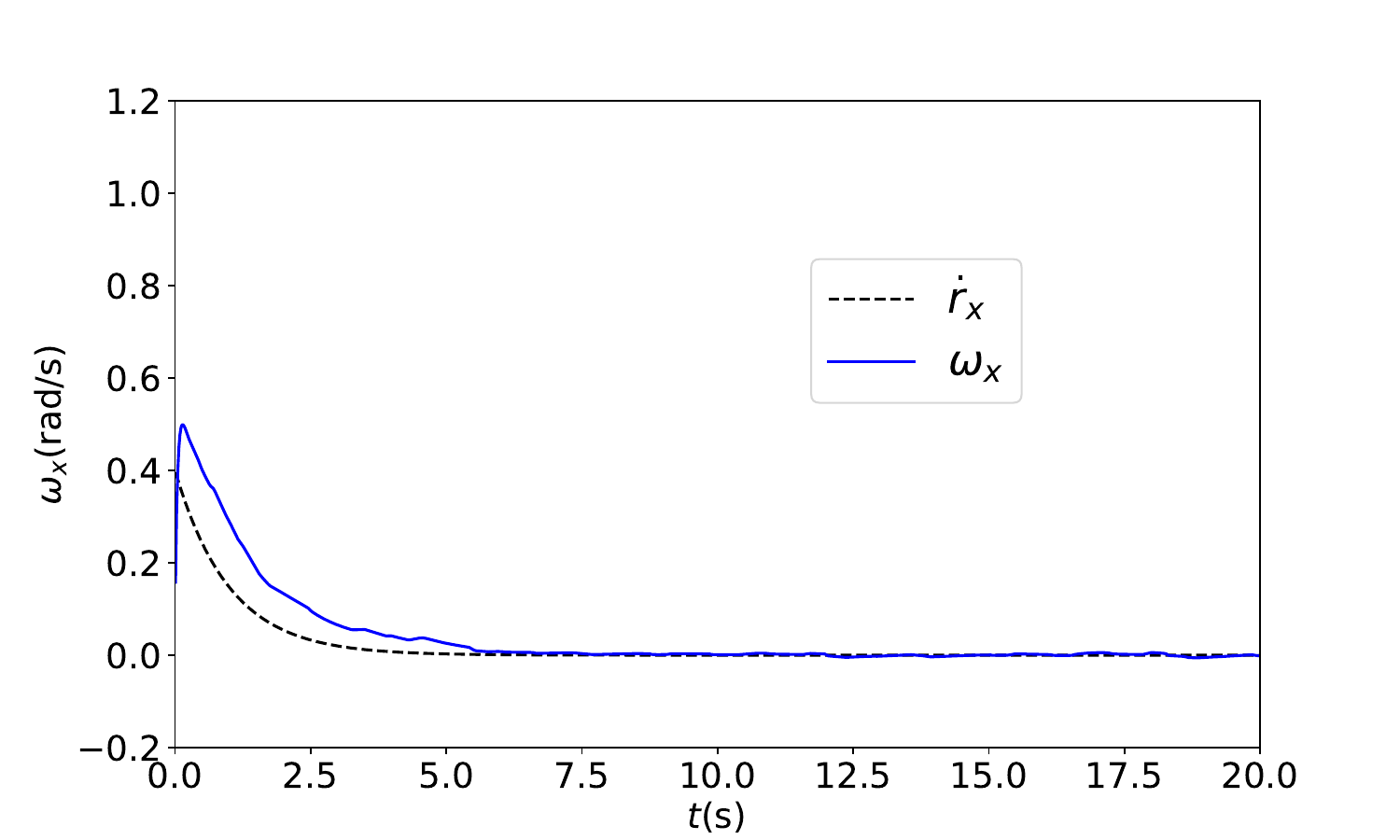}
		\caption{\label{fig:24}}
	\end{subfigure}
	\begin{subfigure}{0.4\textwidth}
		\includegraphics[scale=0.23]{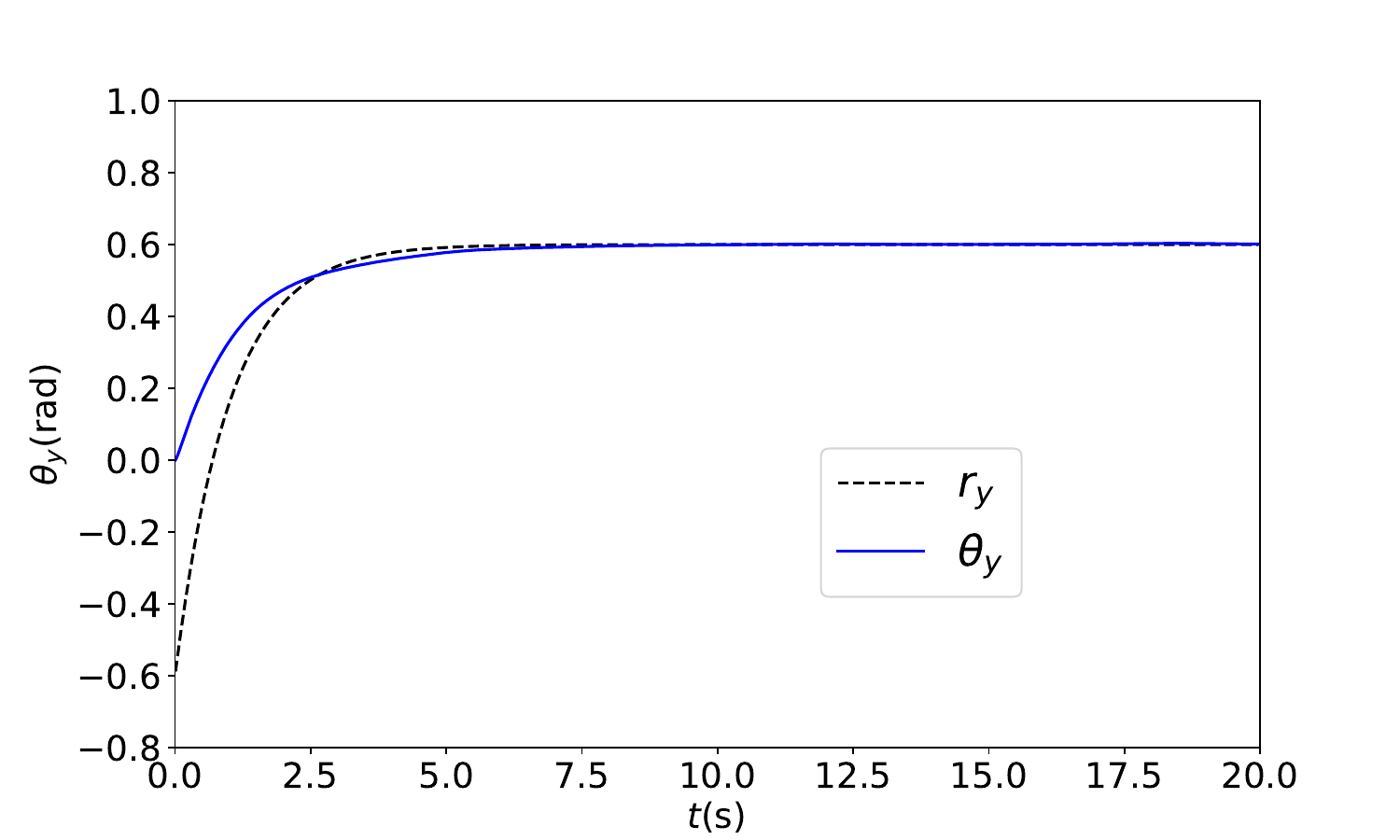}
		\caption{\label{fig:22}}
	\end{subfigure}
	\begin{subfigure}{0.4\textwidth}
		\includegraphics[scale=0.23]{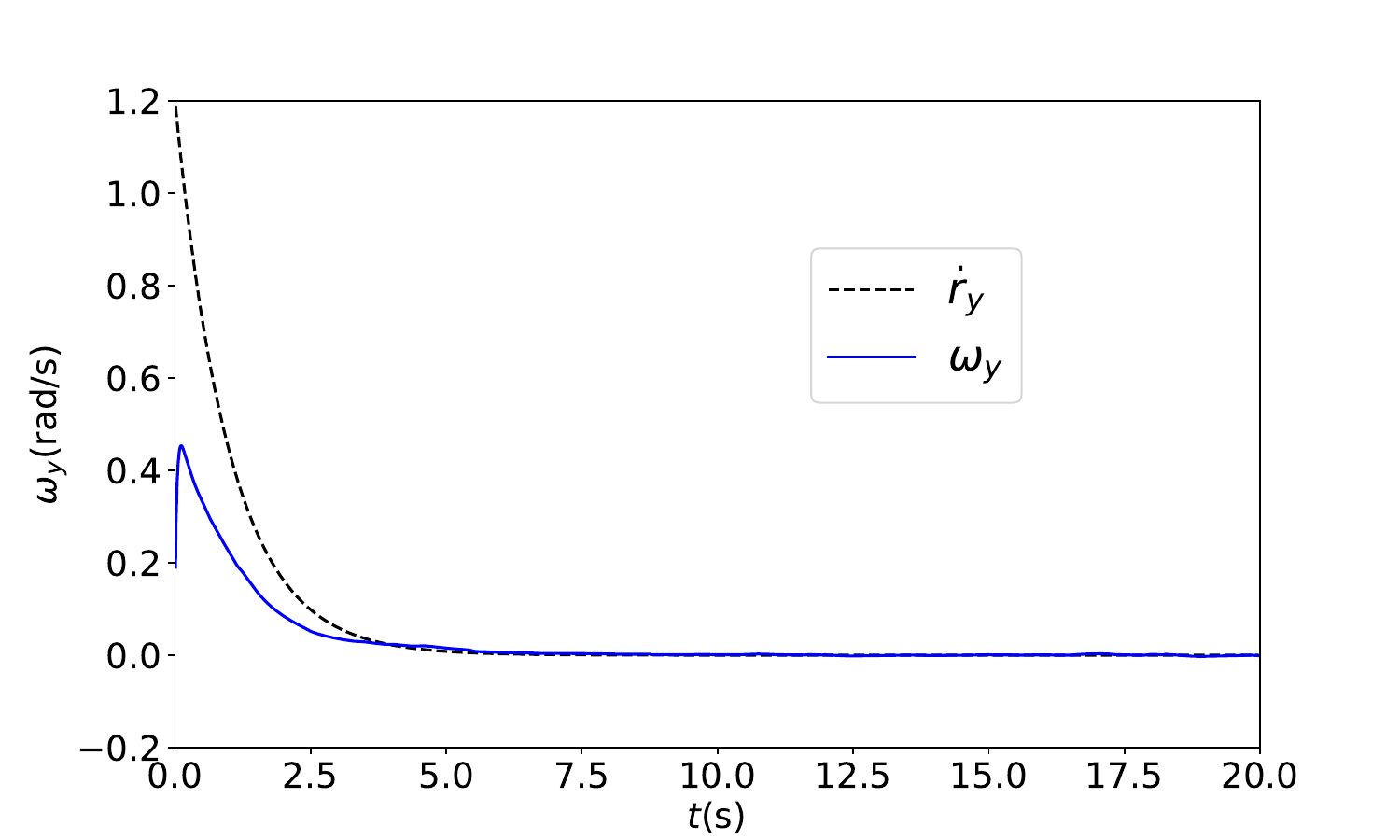}
		\caption{\label{fig:25}}
	\end{subfigure}
	\begin{subfigure}{0.4\textwidth}
		\includegraphics[scale=0.23]{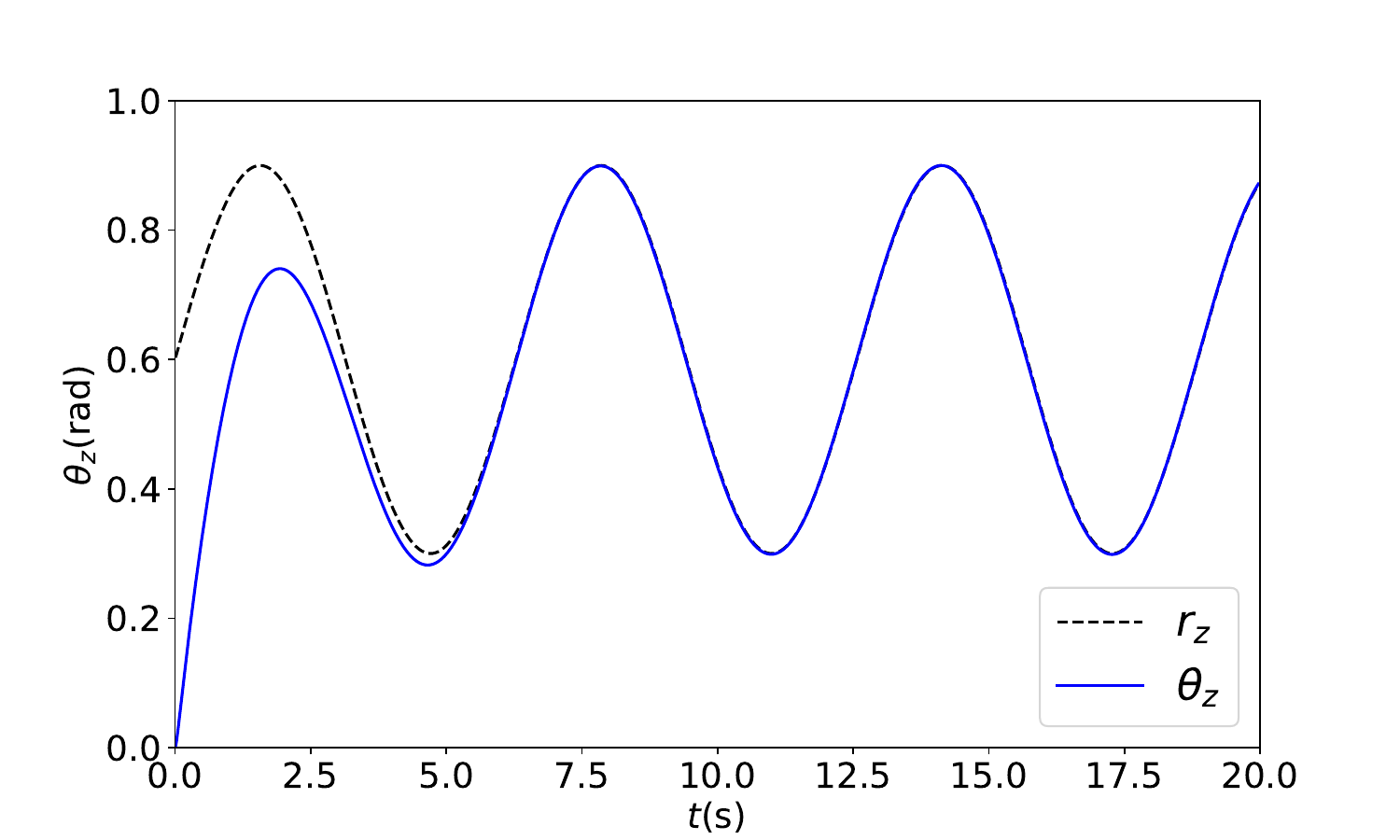}
		\caption{\label{fig:23}}
	\end{subfigure}
	\begin{subfigure}{0.4\textwidth}
		\includegraphics[scale=0.23]{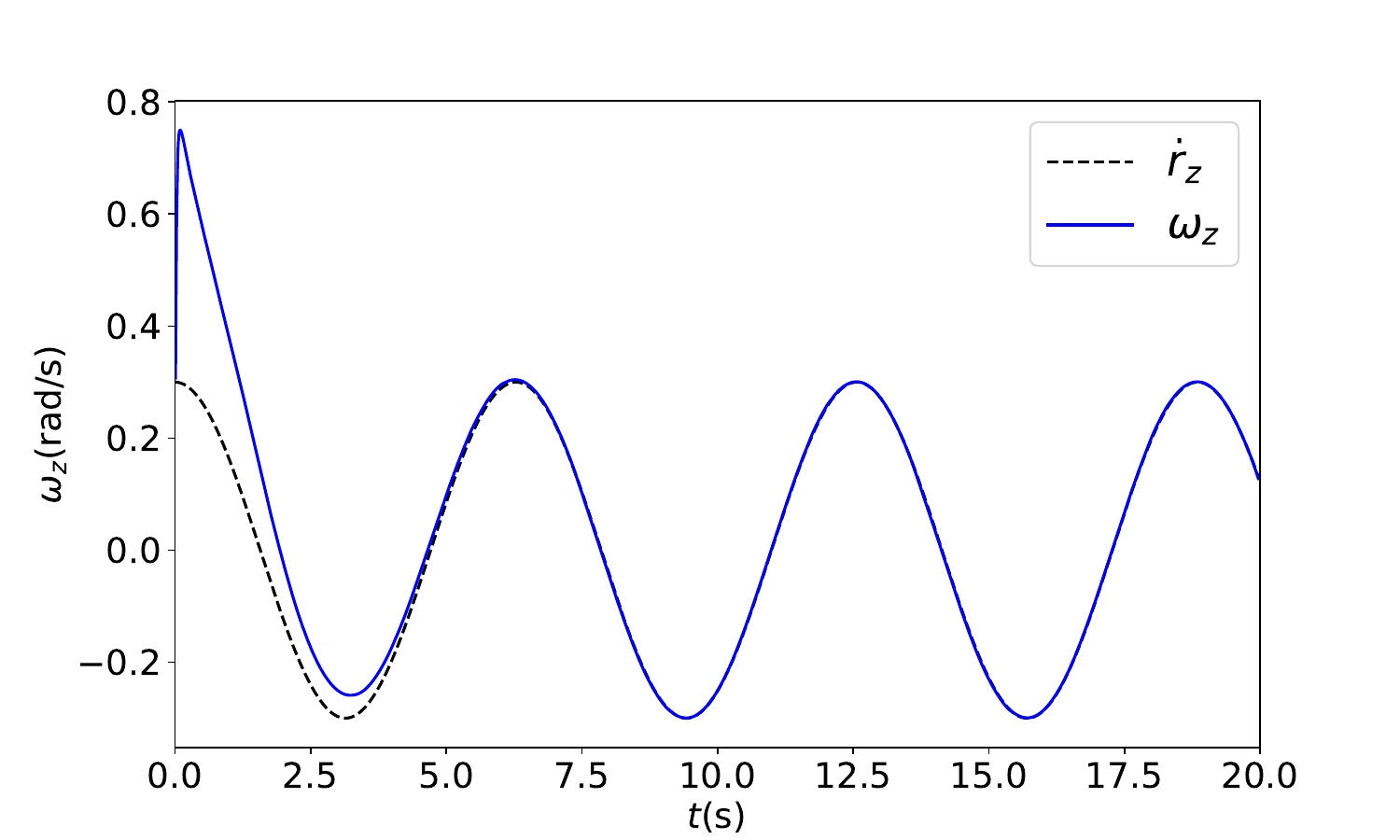}
		\caption{\label{fig:26}}
	\end{subfigure}
	\caption{Tracking property in the test environment after training in two environments. Time evolution of: (\subref{fig:01}) $\theta_x$, (\subref{fig:04}) $\omega_x$, (\subref{fig:02}) $\theta_y$ (\subref{fig:05}) $\omega_y$, (\subref{fig:03}) $\theta_z$, (\subref{fig:06}) $\omega_z$.}
	\label{fig:2}
\end{figure}

{For the model trained in a single environment, the attitude tracking error is $[0.3791,0.2320,0.2245]$ (rad). For the model trained in two environments, the error is $[0.0071,0.0033,0.0031]$ (rad).} The results show that, while training in a single environment can not ensure the tracking property, training in two different environments largely eliminates the interference from the uncontrollable environmental latents in the test environment. Compared with the results for fully controllable systems, small tracking errors still exist. This may come from the finite approximation capability of the model we use in practice and finite training data. 
\section{Discussion}
 
 In this work, we proposed an asymptotic tracking controller for a class of latent dynamic systems. This controller is based on a learned latent dynamic model with identifiable guarantees which ensure the relationship between the learned latents and the actual variables. Then, a manually designed feedback linearization controller ensures the tracking property of the closed-loop system. All the results hold for both fully controllable systems and the case where uncontrollable environmental latents exist. Experiments on a latent spacecraft attitude dynamic model have shown the effectiveness of the proposed controller. The proposed control method and corresponding theoretical results have the potential to provide safety guarantees for intelligent spacecraft.

However, there are still some main limitations in this work. First, we only consider a class of affine nonlinear systems instead of general nonlinear systems. It is a key assumption both for identifiable learning and control. How to identify and control more general nonlinear latent dynamic systems is a promising future research direction. Another main limitation is that this work is mainly a theoretical work, with synthetic datasets using an attitude dynamic model and a randomly initialized neural network as examples to verify the theoretical results. Future work can consider more challenging real-world experiments using images and videos.

\bibliographystyle{IEEEtran}
\bibliography{./sample.bib} 
\end{document}